\newtheorem{Definition}{Definition}
\newtheorem{Theorem}{Theorem}
\newtheorem{Corollary}{Corollary}
\newtheorem{Definition and Notation}{Definition and Notation}
\newtheorem{Lemma}{Lemma}
\begin{document}

\author{
  Maryam Bajalan\thanks{Institute of Mathematics and Informatics, Bulgarian Academy of Sciences, Bl. 8, Acad. G. Bonchev Str., 1113, Sofia, Bulgaria. Emails: maryam.bajalan@math.bas.bg.} \and
  Peter Boyvalenkov\thanks{Institute of Mathematics and Informatics, Bulgarian Academy of Sciences, Bl. 8, Acad. G. Bonchev Str., 1113, Sofia, Bulgaria. Emails: peter@math.bas.bg (Corresponding author).}
  }

\title{On irredundant orthogonal arrays}

\date{}

\maketitle
	\begin{abstract}
	An orthogonal array (OA), denoted by $\text{OA}(M, n, q, t)$, is an $M \times n$ matrix over an alphabet of size $q$ such that every selection of $t$ columns contains each possible $t$-tuple exactly $\lambda=M / q^t$ times. An irredundant orthogonal array (IrOA) is an OA with the additional property that, in any selection of $n - t$ columns, all resulting rows are distinct. IrOAs were first introduced by Goyeneche and \.{Z}yczkowski in 2014 to construct $t$-uniform quantum states without redundant information. Beyond their quantum applications, we focus on IrOAs as a combinatorial and coding theory problem. An OA is an IrOA if and only if its minimum Hamming distance is at least $t + 1$. Using this characterization, we demonstrate that for any linear code, either the code itself or its Euclidean dual forms a linear IrOA, giving a huge source of IrOAs. In the special case of self-dual codes, both the code and its dual yield IrOAs. Moreover, we construct new families of linear IrOAs based on self-dual, Maximum Distance Separable (MDS), and MDS-self-dual codes. Finally, we establish bounds on the minimum distance and covering radius of IrOAs.

	\end{abstract}
	\textbf{Keywords:} Irredundant orthogonal arrays,  Orthogonal arrays, $t$-uniform quantum states, MDS codes, Self-dual codes, Reed-Muller codes, Reed-Solomon codes, Covering radius, Minimum Hamming distance\\
	\textbf{MSC(2020):} 05B30

\section{Introduction}
Irredundant orthogonal arrays (IrOAs), introduced by Goyeneche and Życzkowski in 2014 \cite{GZ2014}, are orthogonal arrays $\text{OA}(M, n, q, t)$ with the additional property that all rows in every projection onto $n - t$ coordinates are pairwise distinct. This irredundancy property is essential to construct $t$-uniform quantum states that contain no redundant information. IrOAs have since been studied from both quantum and combinatorial perspectives. For example, the connection between IrOAs and $t$-uniform quantum states and multipartite entangled states has been further explored in \cite{GRMZ2018,GZ2014, Pang_2019, Zang_2022}. More recently, explicit constructions and characterizations of IrOAs have been studied in \cite{Chen_2023, Chen_2021}, which offer algebraic techniques and general conditions for their existence.

It is known that a $t$-uniform state of $n$ qudits can exist only if $t \leqslant \lfloor n/2 \rfloor$, which implies $n \geqslant 2t$; see \cite{GZ2014}. Consequently, we adopt the assumption $n \geqslant 2t$ throughout this paper. An important equivalence, stated in Lemma \ref{1} (Lemma 2.1 in \cite{Chen_2023}), is that an orthogonal array is irredundant if and only if its minimum Hamming distance is at least $t + 1$. In particular, we show that when $n = 2t$, irredundancy is equivalent to having minimum distance exactly $t + 1$.

Using the equivalence between irredundancy and minimum distance, we explore how linear codes and their duals can give rise to IrOAs. Theorem \ref{sd-thm} shows that a linear code $C$ yields an IrOA if and only if $d \geqslant d^\perp$, where $d$ and $d^\perp$ denote the minimum distances of $C$ and its Euclidean dual $C^\perp$, respectively. Likewise, $C^\perp$ yields an IrOA if and only if $d^\perp \geqslant d$. As a result, at least one of $C$ and $C^\perp$ always defines an IrOA, and both do so if and only if $d = d^\perp$. Based on this result, we construct new families of IrOAs from well-known classes of codes. In particular, we identify IrOAs derived from self-dual codes. Moreover, we study irredundant orthogonal arrays arising from maximum distance separable (MDS) codes, which satisfy the Singleton bound $d = n - k + 1$. We also consider IrOAs derived from MDS self-dual codes, whose parameters are uniquely determined by their length $n$, as shown in Theorem \ref{083}. Building on these results, we provide some explicit criteria for when classical code families, such as generalized Reed-Muller codes and generalized Reed-Solomon codes, give rise to irredundant orthogonal arrays.

In the last section of the paper, we turn to fundamental metric parameters of irredundant orthogonal arrays: their minimum distance and covering radius. We establish upper and lower bounds on the minimum Hamming distance of IrOAs when the index $\lambda$ is greater than $1$. Then we focus on the covering radius of IrOAs. In the extremal case $M = q^{n - t}$, corresponding to equality in the upper bound of the size of an IrOA, we prove that the covering radius is exactly $t$. 
\section{Basic properties of irredundant orthogonal arrays}

We begin by recalling the definition of orthogonal arrays (OAs).
We remark that OAs are considered as sets, and at this point, we do not need any structure of the alphabet.

\begin{Definition} \label{def-oa}
Let $H_q^n$ denote the Hamming space of length $n$ over an alphabet $H_q$ of size $q$. An orthogonal array $\text{OA}(M, n, q, t)$ with index $\lambda = M/q^t$ is a set of $M$ codewords in $H_q^n$ such that every $M \times t$-subarray contains each of the $q^t$ possible $t$-tuples exactly $\lambda$ times.
\end{Definition}

Irredundant orthogonal arrays (IrOAs) were first introduced in \cite{GZ2014}. The concept was developed to ensure that an OA satisfies the necessary condition for constructing $t$-uniform quantum states; see \cite[Section IV]{GZ2014}.

\begin{Definition} \label{def-irr-oa} \cite{GZ2014}
An orthogonal array $\text{OA}(M, n, q, t)$ is called irredundant, denoted $\text{IrOA}(M, n, q, t)$, if every $M \times (n - t)$-subarray has pairwise distinct rows.
\end{Definition}

It follows immediately from Definition \ref{def-oa} that any $\text{OA}(M, n, q, t)$ is also an $\text{OA}(M, n, q, t_1)$ for every positive integer $t_1 < t$. Furthermore, by Definition \ref{def-irr-oa}, if an OA is irredundant for strength $t$, then it remains irredundant when viewed as an $\text{OA}(M, n, q, t_1)$, because $n - t_1 > n - t$, hence the condition of pairwise distinct rows is preserved. Therefore, we always assume that the parameter $t$ is maximal and refer to it as the strength of the OA.

Another immediate consequence of Definition \ref{def-irr-oa} is the inequality $M \leqslant q^{n - t}$, because otherwise some rows would necessarily repeat in any $M \times (n - t)$-subarray, a contradiction. Combined with $M = \lambda q^t \geqslant q^t$, this implies $n \geqslant 2t$. Consequently, we adopt the assumption $n \geqslant 2t$ throughout this paper.  This assumption aligns with the known constraint that a $t$-uniform state of $n$ qudits can exist only if $t \leqslant \lfloor n/2 \rfloor$, that is, $n \geqslant 2t$; see \cite{GZ2014}.

Orthogonal arrays and irredundant orthogonal arrays can also be viewed as codes in the Hamming space $H_q^n$. We adopt standard notation and conventions from coding theory. For an orthogonal array $C \subseteq H_q^n$, the minimum distance is defined by
    \begin{align*}
    d=d(C) := \min_{\substack{x, y \in C, \ x \neq y}} d(x, y),
    \end{align*}
where $d(x, y)$ denotes the Hamming distance between codewords $x$ and $y$. Moreover, the covering radius of $C$ is defined by
    \begin{align*}
\rho=\rho(C) := \max_{x \in H_q^n} \min_{y \in C} d(x, y),
    \end{align*}
representing the maximum Hamming distance from any vector in $H_q^n$ to the nearest codeword in $C$.

Further basic properties of irredundant orthogonal arrays are presented as statements now. 

\begin{Lemma}\cite[Theorem 4.21]{Hedayat_1999}\label{15}
The orthogonal array $\text{OA}(M, n, q, t)$ has index $\lambda = 1$ if and only if $d = n - t + 1$.
\end{Lemma}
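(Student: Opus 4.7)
My plan is to prove both directions using the simple observation that two rows of an OA agreeing in $\geq t$ coordinates would force some $t$-tuple to appear at least twice in those coordinates, which directly constrains $\lambda$.

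For the forward direction ($\lambda = 1 \Rightarrow d = n - t + 1$), I would first show $d \geqslant n - t + 1$. Assume toward contradiction there exist two distinct rows $x, y \in C$ with $d(x,y) \leqslant n - t$, so $x$ and $y$ agree in at least $t$ coordinates. Picking any $t$ of those agreement positions, the $t$-tuple induced there by $x$ equals the one induced by $y$, so this $t$-tuple appears at least twice among the $M$ rows, contradicting $\lambda = 1$. For the matching upper bound $d \leqslant n - t + 1$, I would either invoke the Singleton bound (since $|C| = M = q^t$ gives $d \leqslant n - \log_q M + 1 = n - t + 1$) or argue directly: among the $q^t$ rows, fix one row $x$; restricted to any fixed set of $t$ coordinates, some other row $y$ must share a particular value with $x$ in at least one coordinate (in fact, by a counting/averaging argument one can find $y \neq x$ agreeing with $x$ in at least $t - 1$ positions), yielding $d(x,y) \leqslant n - t + 1$.

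For the backward direction ($d = n - t + 1 \Rightarrow \lambda = 1$), I would suppose $\lambda \geqslant 2$. Then in any fixed selection of $t$ columns, each $t$-tuple appears $\lambda \geqslant 2$ times; in particular, there exist two distinct rows $x \neq y$ with identical entries in those $t$ columns, so $d(x,y) \leqslant n - t$, contradicting $d = n - t + 1$. Hence $\lambda = 1$.

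The main obstacle, if any, is the inequality $d \leqslant n - t + 1$ in the forward direction; the cleanest route is to cite the Singleton bound and combine it with $|C| = q^t$, rather than produce an explicit pair of close codewords. Once that is settled, both directions reduce to the same elementary pigeonhole argument about $t$ coinciding coordinates forcing a repeated $t$-tuple, so the proof should fit in a few lines.
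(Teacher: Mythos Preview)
The paper does not supply its own proof of this lemma: it is stated with a citation to \cite[Theorem 4.21]{Hedayat_1999} and no proof environment follows. So there is nothing in the paper to compare your argument against beyond the bare statement.

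Your proposal is correct. The backward direction and the lower bound $d \geqslant n - t + 1$ in the forward direction are both the same clean pigeonhole: two rows agreeing on $t$ coordinates force a repeated $t$-tuple in those columns, which is incompatible with $\lambda = 1$. For the upper bound $d \leqslant n - t + 1$, your Singleton-bound route is the right choice: $\lambda = 1$ gives $M = q^t$, and the general (not just linear) Singleton bound $|C| \leqslant q^{n-d+1}$ yields $d \leqslant n - t + 1$ immediately. Your alternative ``direct'' construction of a close pair is stated a bit loosely, but it can be made precise easily: fix a row $x$, take the $t$-tuple that agrees with $x$ in columns $1,\ldots,t-1$ and differs in column $t$; the unique row $y$ realizing that tuple (since $\lambda=1$) satisfies $y \neq x$ and agrees with $x$ in $t-1$ positions, so $d(x,y) \leqslant n - t + 1$. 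Either way the forward direction closes, and your writeup would indeed fit in a few lines.
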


\begin{Lemma}\cite[Theorem 2]{GZ2014}\label{046}
Every orthogonal array $\text{OA}(M, n, q, t)$ of index $\lambda = 1$ is irredundant.
\end{Lemma}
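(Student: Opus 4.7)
The plan is to give a short proof by combining the two equivalences that are already available in the excerpt: Lemma \ref{15}, which identifies $\lambda=1$ arrays as those attaining $d=n-t+1$, and the forward-cited characterization (Lemma \ref{1}) stating that an OA is irredundant precisely when $d\geqslant t+1$. Since the paper has committed to the standing assumption $n\geqslant 2t$, the arithmetic is immediate: an index-one array has $d=n-t+1\geqslant t+1$, and the irredundancy characterization closes the argument. So the structure would be: first remind the reader of the standing hypothesis $n\geqslant 2t$; next invoke Lemma \ref{15} to write $d=n-t+1$; then bound $n-t+1\geqslant t+1$; and finally conclude via the $d\geqslant t+1$ characterization. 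This is essentially a one-line deduction, so there is no real obstacle; the only thing to be careful about is that the characterization Lemma \ref{1} is being forward-referenced, but the excerpt already announces it, so citing it here is legitimate.

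A small stylistic choice is whether to instead supply a self-contained proof that avoids Lemma \ref{1} and argues directly from the definitions. I would prefer to include this as an alternative, since it is equally short and underlines why the hypothesis $n\geqslant 2t$ is exactly what is needed. Concretely: suppose for contradiction that two distinct codewords $x,y$ of an $\mathrm{OA}(M,n,q,t)$ with $\lambda=1$ agree on some set $S$ of $n-t$ coordinates. Because $n\geqslant 2t$, we have $|S|=n-t\geqslant t$, so we may pick any $t$-subset $T'\subseteq S$. Then $x|_{T'}=y|_{T'}$, but the index-one hypothesis says that the projection onto the $t$ coordinates $T'$ lists every element of $H_q^{t}$ exactly once, hence is injective on the codewords of the array. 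This forces $x=y$, contradicting our choice of distinct codewords, so no such $S$ exists and the array is irredundant.

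Between the two proofs, I would present the second (direct) one, because it is self-contained, illustrates the role of $n\geqslant 2t$ in a transparent way, and does not rely on the characterization Lemma \ref{1} being proved already. The main (and only) subtle point to flag is the use of $n\geqslant 2t$: without it, $S$ could be smaller than $t$ and no $t$-subset $T'$ would be available inside $S$, so the argument would fail — which is exactly why the paper adopts that assumption from the outset.
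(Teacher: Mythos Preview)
Your proposal is correct, but note that the paper does not actually supply its own proof of this lemma: it simply cites \cite[Theorem 2]{GZ2014} and moves on. So there is nothing in the paper to compare against line by line.

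That said, both of your arguments are sound. The first route (combine Lemma~\ref{15} with Lemma~\ref{1}) is valid and non-circular, since the paper's proof of Lemma~\ref{1} does not use Lemma~\ref{046}; the only cosmetic issue is the forward reference in the paper's ordering. Your second, self-contained argument is cleaner for exactly the reason you identify: it makes the role of the standing hypothesis $n\geqslant 2t$ explicit (you need $n-t\geqslant t$ to find a $t$-subset $T'$ inside the agreement set $S$), and it appeals only to the definitions. Either would be a perfectly acceptable insertion; the direct one is preferable if you want the lemma to stand on its own before Lemma~\ref{1} is established.
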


\begin{Lemma} \cite[Lemma 2.1]{Chen_2023} \label{1}
The orthogonal array $\text{OA}(M, n, q, t)$ is irredundant if and only if $d \geqslant t + 1$.
\end{Lemma}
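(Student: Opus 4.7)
The plan is to translate both conditions—irredundancy and the minimum-distance bound $d \geqslant t+1$—into equivalent statements about the \emph{agreement set} of pairs of distinct codewords, where for $x, y \in C$ the agreement set is $A(x,y) := \{ i \in \{1, \ldots, n\} : x_i = y_i \}$, of size $n - d(x,y)$. Both properties will then become the single statement "no $(n-t)$-subset of columns is contained in any $A(x,y)$," making the equivalence immediate. Notice that this argument does not use the orthogonal-array structure of $C$ at all; it is a purely combinatorial fact about subsets of $H_q^n$.

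For the direction $d \geqslant t+1 \Rightarrow$ irredundant, I would fix an arbitrary $(n-t)$-subset $S$ of the column index set and take any two distinct codewords $x, y \in C$. Since $d(x,y) \geqslant t+1$, the agreement set satisfies $|A(x,y)| = n - d(x,y) \leqslant n - t - 1 < n - t = |S|$. Hence $S \not\subseteq A(x,y)$, so $x$ and $y$ differ in at least one column of $S$, and their projections onto $S$ are distinct. Since $S$ was arbitrary, $C$ is irredundant.

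For the reverse direction I would argue the contrapositive. Suppose $d \leqslant t$; then there exist distinct $x, y \in C$ with $d(x,y) \leqslant t$, so $|A(x,y)| \geqslant n - t$. I would then pick any $(n-t)$-subset $S \subseteq A(x,y)$, which exists for cardinality reasons, and observe that $x$ and $y$ coincide on every column of $S$. Their projections to $S$ are therefore identical, so the $M \times (n-t)$-subarray indexed by $S$ has a repeated row, violating irredundancy.

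There is no real obstacle here; the only care needed is in the bookkeeping between "Hamming distance at least $t+1$" and "agreement set of size at most $n-t-1$," and in the trivial observation that any set of size at least $n-t$ contains an $(n-t)$-subset. The proof will therefore be very short, on the order of a few lines.
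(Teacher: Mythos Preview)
Your proposal is correct and is essentially the same argument as the paper's own proof: both directions reduce to the observation that two codewords agree on some $(n-t)$-subset of coordinates if and only if their Hamming distance is at most $t$. The only differences are cosmetic—your use of the agreement-set language and a direct/contrapositive framing in place of the paper's two proofs by contradiction.
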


\begin{proof} Let $d\geqslant t+1$, and assume that $\text{OA}(M, n, q, t)$ is not irredundant. Then there exist codewords $x, y$ in $\text{OA}(M, n, q, t)$ which coincide in $n-t$ positions. This means that $d(x,y) \leqslant n-(n-t)=t$, a contradiction. Thus, $\text{OA}(M, n, q, t)$ is irredundant. 

Let $\text{OA}(M, n, q, t)$ be irredundant and assume that $d \leqslant t$. Then there are two distinct codewords $x,y$ in $\text{OA}(M, n, q, t)$ that differ in at most $t$ positions. This means that $x$ and $y$ coincide in the remaining at least $n-t$ positions, a contradiction to the irredundancy condition. Thus, $d \geqslant t+1$.
\end{proof}

From the inequalities $M \leqslant q^{n - t}$ and $n \geqslant 2t$, it is clear that the cases $M = q^{n - t}$ and $n = 2t$  are extreme cases for IrOAs. These cases are described in the next two Theorems.  

\begin{Theorem}\label{047}
Let $n=2t$ and $C$ be an $\text{OA}(M, n, q, t)$. Then the following are equivalent: 
\begin{enumerate}
    \item[{\rm (1)}] $C$ is irredundant.
    \item[{\rm (2)}] the index of $C$ is $\lambda=1$.
    \item[{\rm (3)}] the minimum distance of $C$ is $d=t+1=n-t+1$.
\end{enumerate}
\end{Theorem}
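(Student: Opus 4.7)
The plan is to prove the cyclic chain $(1)\Rightarrow(2)\Rightarrow(3)\Rightarrow(1)$, leaning on the three lemmas already established and on the size bound $M\leqslant q^{n-t}$ that was derived from irredundancy immediately after Definition \ref{def-irr-oa}.

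First I would handle $(1)\Rightarrow(2)$, which is the only step with any real content. Assuming $C$ is irredundant, the bound $M\leqslant q^{n-t}$ holds; substituting $n=2t$ gives $M\leqslant q^{t}$. Combined with $M=\lambda q^{t}$ and the trivial lower bound $\lambda\geqslant 1$, this forces $\lambda q^{t}\leqslant q^{t}$, hence $\lambda=1$.

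Next, $(2)\Rightarrow(3)$ is immediate from Lemma \ref{15}: index $\lambda=1$ is equivalent to $d=n-t+1$, which under $n=2t$ reads $d=t+1=n-t+1$. Finally, $(3)\Rightarrow(1)$ follows from Lemma \ref{1}, since $d=t+1\geqslant t+1$ is exactly the irredundancy criterion.

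There is no real obstacle in this argument; everything reduces to arithmetic with $n=2t$ and invocations of previously recorded results. The only point worth stating carefully is the derivation of $\lambda=1$ from irredundancy in the first step, because this is the part that genuinely uses the extremal hypothesis $n=2t$ and that is not a direct quotation of an earlier lemma. One could equivalently collapse the proof to just two implications by combining Lemmas \ref{15} and \ref{046} for the equivalence $(2)\Leftrightarrow(3)$ and the trivial direction $(2)\Rightarrow(1)$, and then closing the loop via $(1)\Rightarrow(2)$ as above, but the cyclic presentation is cleanest.
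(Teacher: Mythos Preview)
Your proof is correct and essentially matches the paper's: the paper also reduces $(2)\Leftrightarrow(3)$ to Lemma~\ref{15}, gets $(2)\Rightarrow(1)$ from Lemma~\ref{046}, and proves $(1)\Rightarrow(2)$ by the same observation that under $n=2t$ the distinct $(n-t)$-rows force each $t$-tuple to occur once (which is exactly the content of your size-bound argument $M\leqslant q^{n-t}=q^t$). The only difference is packaging---you close the cycle with Lemma~\ref{1} instead of Lemma~\ref{046}---but the substance is identical.
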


\begin{proof}
By Lemma \ref{15} and the assumption $n=2t$, the statements $(2)$ and $(3)$ are equivalent.
If $\lambda = 1$, then by Lemma \ref{046}, $\text{OA}(M, n, q, t)$ is irredundant, which shows that $(2)$ implies $(1)$. Suppose that $\text{OA}(M, n, q, t)$ is irredundant. Then every $M \times (n - t)$-subarray must consist of distinct rows. Since $n - t = t$, it follows that every $t$-column subarray contains all $q^t$ possible $t$-tuples, each appearing exactly once.  Hence, $\lambda = 1$, showing that {\rm(1)} implies {\rm(2)}.
\end{proof}

\begin{Theorem}\label{035} 
An orthogonal array $\text{OA}(q^{n-t}, n, q, t)$ is irredundant if and only if $n = 2t$ (when also $\lambda=1$ and $d=t+1$). 
\end{Theorem}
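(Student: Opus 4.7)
The argument splits into the two directions of the equivalence, with the forward direction $n=2t \Rightarrow$ irredundant being essentially immediate and the converse relying on the convention that $t$ denotes the maximal strength.

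For the direction $n=2t \Rightarrow$ irredundancy, one simply notes that $M=q^{n-t}=q^{t}$, so the index is $\lambda=M/q^{t}=1$, and Lemma \ref{046} gives irredundancy. The supplementary facts $\lambda=1$ and $d=t+1$ follow directly from Theorem \ref{047}.

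For the converse, I would assume that $C=\text{OA}(q^{n-t},n,q,t)$ is irredundant and argue that this saturation of the size inequality $M\leqslant q^{n-t}$ forces the strength of $C$ to be at least $n-t$. Indeed, by Definition \ref{def-irr-oa}, every $M\times(n-t)$-subarray of $C$ has pairwise distinct rows. Since $|C|=q^{n-t}$ equals the total number of possible $(n-t)$-tuples over $H_q$, each such tuple appears exactly once in the projection onto any chosen $n-t$ columns. This property for every choice of $n-t$ columns is precisely the definition of an orthogonal array of strength $n-t$ with index $1$. By the paper's standing convention that $t$ is the \emph{maximal} strength of $C$, one must have $n-t\leqslant t$, i.e., $n\leqslant 2t$. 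Combined with the standing hypothesis $n\geqslant 2t$, this forces $n=2t$, and Theorem \ref{047} supplies $\lambda=1$ and $d=t+1$.

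The only conceptually delicate point of the argument is the converse observation that irredundancy of an OA of the maximal admissible size $M=q^{n-t}$ automatically upgrades the strength from $t$ to $n-t$; the rest is bookkeeping. There is no real computational obstacle, and no further constructions or bounds are needed beyond Lemma \ref{046} and Theorem \ref{047}, both already established.
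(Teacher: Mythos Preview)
Your proposal is correct and matches the paper's proof essentially step for step: both directions use the same ideas (saturation $M=q^{n-t}$ plus irredundancy forces every $(n-t)$-tuple to appear exactly once, hence strength $\geqslant n-t$, and maximality of $t$ together with $n\geqslant 2t$ gives $n=2t$; conversely $n=2t$ yields $\lambda=1$ and irredundancy via Lemma~\ref{046}/Theorem~\ref{047}).
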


\begin{proof} The inequality $M \leqslant q^{n-t}$ was already mentioned as an immediate consequence of Definition \ref{def-irr-oa}. If the equality $M = q^{n - t}$ holds, then each $M \times (n - t)$-subarray contains all possible $(n - t)$-tuples exactly once. This means the OA has strength $n - t$ and index $\lambda = 1$. Because the OA has maximal strength $t$ by assumption, we have $n - t \leqslant t$, and hence $ n = 2t $, which implies $\lambda=1$ and $d=t+1$. 

If $n=2t$, Theorem \ref{047} implies that $\lambda=1$; i.e., any $M \times t=M \times (n-t)$-subarray has all possible distinct rows of length $n-t$, which means $M = q^{n-t}$. 
\end{proof}
\section{IrOAs from linear codes and their duals}

\subsection{Dual codes and IrOAs}
Throughout this section, let $q$ be a prime power. Let $C$ be a linear code of length $n$ over $\mathbb{F}_q$. The Euclidean dual code $C^\perp$ is defined by
\begin{align*}
C^\perp= \left\{ \mathbf{v} \in \mathbb{F}_q^n \,\, : \,\, \mathbf{v} \cdot \mathbf{c} = 0 \,\, \text{for all} \,\, \mathbf{c} \in C \right\},   
\end{align*}
where $\mathbf{u} \cdot \mathbf{v} = u_1v_1 + \cdots + u_nv_n$ is the standard inner product (here $u=(u_1,\ldots,u_n)$, $v=(v_1,\ldots,v_n)$, as usual). If $C$ has parameters $[n, k, d]_q$, then $C^\perp$ has parameters $[n, n - k, d^\perp]_q$, and $(C^\perp)^\perp = C$. The codewords of $C$ form the rows of a linear orthogonal array $\text{OA}(q^k, n, q, t)$ of strength $t = d^\perp - 1$ (cf. \cite[Theorem 4.6]{Hedayat_1999}). Similarly, $C^\perp$ forms a linear orthogonal array $\text{OA}(q^{n-k}, n, q, d - 1)$ of strength $t^\perp = d - 1$.

The well-known inequality $d + d^\perp \leqslant n + 2$ (see, e.g., \cite[Theorem 4.5]{Levenshtein_1998}) implies $n \geqslant 2\min\{t, t^\perp\}$, which suggests a fundamental connection between Euclidean duality and irredundant orthogonal arrays. As the following theorem shows, for any linear code $C$, at least one of $C$ and $C^\perp$ is an irredundant orthogonal array.

\begin{Theorem} \label{sd-thm}
Let $C = [n,k,d]_q$ be a linear code over $\mathbb{F}_q$, and let $C^\perp$ be its Euclidean dual. Then $n \geqslant 2 \min\{t, t^\perp\}$, and the following hold: \begin{enumerate}
    \item[{\rm (1)}] $C$ ($C^\perp$, resp.) is an irredundant OA if and only if $d \geqslant d^\perp$ ($d^\perp \geqslant d$, resp.). In particular, if $d \neq d^\perp$, then exactly one of the OAs $C$ and $C^\perp$ is an irredundant OA (the one with greater minimum distance).  
    \item[{\rm (2)}] Both $C$ and $C^\perp$ are irredundant OAs if and only if $d^\perp = d$.
 \end{enumerate}
\end{Theorem}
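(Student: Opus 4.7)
The plan is to apply Lemma \ref{1} directly to each of $C$ and $C^\perp$, exploiting the standard identification of the strength of a linear code's associated orthogonal array with the minimum distance of the Euclidean dual minus one, which is recorded immediately before the theorem statement.

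First I would verify the auxiliary inequality $n \geqslant 2\min\{t, t^\perp\}$. Substituting $t = d^\perp - 1$ and $t^\perp = d - 1$, this becomes $n \geqslant 2\min\{d, d^\perp\} - 2$, which is immediate from the cited bound $d + d^\perp \leqslant n + 2$ together with $2\min\{d, d^\perp\} \leqslant d + d^\perp$. This also confirms that the global hypothesis $n \geqslant 2t$ adopted in the paper is consistent with viewing either $C$ or $C^\perp$ as an OA.

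For part (1), I would view $C$ as an $\text{OA}(q^k, n, q, t)$ with strength $t = d^\perp - 1$ and invoke Lemma \ref{1}: $C$ is irredundant if and only if its minimum distance satisfies $d \geqslant t + 1 = d^\perp$. Applying the same reasoning to $C^\perp$, whose strength as an OA is $t^\perp = d - 1$, yields that $C^\perp$ is irredundant iff $d^\perp \geqslant t^\perp + 1 = d$. The ``exactly one'' clause then drops out: when $d \neq d^\perp$, precisely one of the two inequalities $d \geqslant d^\perp$ and $d^\perp \geqslant d$ can hold, and it is the one corresponding to the code of larger minimum distance. Part (2) follows at once, since both $C$ and $C^\perp$ are irredundant iff both $d \geqslant d^\perp$ and $d^\perp \geqslant d$ hold, which is equivalent to $d = d^\perp$.

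There is no substantive obstacle: the proof is essentially a translation exercise combining Lemma \ref{1} with the dual-distance/strength correspondence, and the inequality $d + d^\perp \leqslant n + 2$ is invoked only to certify the auxiliary bound on $n$. The only point requiring a bit of care is to keep track of which ``$d$'' and which ``$t$'' refer to $C$ and which to $C^\perp$ when applying Lemma \ref{1} symmetrically.
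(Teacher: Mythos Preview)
Your proposal is correct and follows essentially the same approach as the paper: apply Lemma~\ref{1} to $C$ (resp.\ $C^\perp$) using the strength identifications $t = d^\perp - 1$ and $t^\perp = d - 1$, from which parts (1) and (2) follow immediately. You even treat the auxiliary inequality $n \geqslant 2\min\{t,t^\perp\}$ more explicitly than the paper's own proof, which leaves that observation to the preamble.
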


\begin{proof}
 The code $ C $ is an orthogonal array $ \text{OA}(q^k, n, q, t) $ of strength $ t = d^\perp - 1 $. By Lemma \ref{1}, $ C $ is irredundant if and only if $ d \geqslant t + 1 $, which is equivalent to $  d \geqslant (d^\perp - 1) + 1=d^\perp$. The proof for $C^{\perp}$ is similar by interchanging the roles of $C$ and $C^\perp$. The second statement follows directly from the first.
\end{proof}
\subsection{Self-dual codes and IrOAs}\label{085}
A linear code $C$ over a finite field is said to be self-dual if $C = C^{\perp}$. Since $\dim(C) + \dim(C^{\perp}) = n$, it follows that $n = 2k$; in particular, $n$ must be even and $k = n/2$. The theory of self-dual codes is well developed and has been extensively studied; see, for example, \cite{Bou2021, Huffman_2010, NRS-book}.  

The following theorem establishes a connection between self-dual codes and irredundant orthogonal arrays.

\begin{Theorem}\label{059}
 Every self-dual $[n,n/2,d]_q$ linear code is an $\text{IrOA}(q^{n/2}, n, q, t)$ with minimum distance $d = t + 1$. 
\end{Theorem}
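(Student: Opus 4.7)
The plan is to read off the conclusion as an essentially immediate consequence of Theorem \ref{sd-thm}, once I identify the parameters of the orthogonal array associated with a self-dual code. Since the theorem is a direct corollary of what has been set up, the "proof" is really a matter of matching symbols, so I will be explicit about that matching.

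First I would invoke the self-duality hypothesis $C = C^\perp$. This tells me two things at once: the dimensions agree, so $k = n - k$ and hence $k = n/2$, which forces $n$ to be even; and the minimum distances agree, so $d = d^\perp$. Consequently the underlying orthogonal array has $q^k = q^{n/2}$ rows, matching the claimed size parameter $M = q^{n/2}$.

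Next I would identify the strength. As recalled just before Theorem \ref{sd-thm}, any linear code viewed as an OA has strength $t = d^\perp - 1$. Substituting $d^\perp = d$ yields $t = d - 1$, that is, $d = t + 1$, which is exactly the minimum distance claim of the theorem.

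Finally I would apply Theorem \ref{sd-thm}, part (2): since $d = d^\perp$, both $C$ and $C^\perp$ are irredundant orthogonal arrays (and in the self-dual case these are literally the same array). Alternatively, one could short-circuit Theorem \ref{sd-thm} and appeal directly to Lemma \ref{1}: the inequality $d \geqslant t + 1$ needed for irredundancy holds with equality because $t = d - 1$. Either route works, and I expect no genuine obstacle; the only thing to watch is not to conflate the two a priori distinct uses of $d$ (the minimum distance of the code and the strength-plus-one of the OA), which coincide precisely because of self-duality.
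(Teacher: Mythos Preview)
Your proposal is correct and matches the paper's own proof essentially verbatim: the paper also just invokes Theorem~\ref{sd-thm} together with the observation that self-duality gives $d = d^\perp$, whence $t = d^\perp - 1 = d - 1$. Your extra remark that one could bypass Theorem~\ref{sd-thm} and appeal directly to Lemma~\ref{1} is also fine and does not change the substance.
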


\begin{proof}
It follows immediately from Theorem \ref{sd-thm}. Since $d = d^\perp$, the strength satisfies $t = d^\perp - 1 = d - 1$.
\end{proof}
\subsection{Reed-Muller codes and IrOAs}\label{093}

The quantum relation of the classical Reed-Muller (RM) codes is known since 1999 \cite{Ste1999}. The quantum RM codes are a candidate for implementing universal quantum computation. For investigations of certain properties of quantum RM codes via the properties of their classical counterparts we refer to \cite{BCHK}. RM codes are also closely related to the polar codes (cf. \cite{Ari2009,Ari2010,MHU2014}, which have advanced theoretical and practical applications.

We now establish a connection between one of the most prominent families of codes, the Reed-Muller codes, and irredundant orthogonal arrays. 

Let $m$ be a positive integer and $r$ an integer with $0 \leqslant r \leqslant m$. Define the matrix $G(0, m) = [1\ 1\ \cdots\ 1]$ and the matrix $G(m, m)$ as the identity matrix $I_{2^m}$. The binary Reed-Muller code $R(r, m)$ is a linear code of length $n = 2^m$, defined recursively by the following generator matrix
\begin{align*}
    G(r,m) =
\begin{bmatrix}
G(r,m - 1) & G(r,m - 1) \\
0 & G(r - 1,m - 1)
\end{bmatrix}.
\end{align*}
For example, the generator matrices for $R(1, 2)$ and $R(1, 3)$ are given by
\begin{align}\label{090}
    G(1, 2) =
\begin{bmatrix}
1 & 0 & 1 & 0 \\
0 & 1 & 0 & 1 \\
0 & 0 & 1 & 1
\end{bmatrix}, \quad
G(1, 3) =
\begin{bmatrix}
1 & 0 & 1 & 0 & 1 & 0 & 1 & 0 \\
0 & 1 & 0 & 1 & 0 & 1 & 0 & 1 \\
0 & 0 & 1 & 1 & 0 & 0 & 1 & 1 \\
0 & 0 & 0 & 0 & 1 & 1 & 1 & 1
\end{bmatrix}.
\end{align}

The main parameters of RM codes are given below.
\begin{Lemma}\cite[Theorem 1.10.1]{Huffman_2010}
    For any RM code $ R (r,m)$, the following hold:
    \begin{enumerate}
        \item [{\rm (1)}] The dimension of $R(r,m)$ is $k={\sum_{i=0}^{r} \binom{m}{i}}$.
        \item [{\rm (2)}] The minimum Hamming distance of $R(r,m)$ is $d=2^{m-r}$. 
        \item [{\rm (3)}] The dual of $R(r,m)$ is $R(r,m)^{\perp} = R(m - r - 1,m)$.
    \end{enumerate}
\end{Lemma}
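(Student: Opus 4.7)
The plan is to prove all three parts simultaneously by induction on $m$, exploiting the recursive definition of $G(r,m)$. The base cases are handled directly: $R(0,m)$ is the repetition code of length $2^m$ (dimension $1$, minimum distance $2^m$), $R(m,m)=\mathbb{F}_2^{2^m}$ (dimension $2^m$, minimum distance $1$), and their duals agree with the claimed formula (with the convention $R(-1,m)=\{0\}$). Fix $m\geqslant 2$ and $0<r<m$, assuming the lemma holds at length $2^{m-1}$.

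For part (1), I would first show that the rows of the block generator matrix are linearly independent: a relation $\alpha\cdot[G(r,m-1)\mid G(r,m-1)]+\beta\cdot[0\mid G(r-1,m-1)]=0$ projected onto the left block gives $\alpha\cdot G(r,m-1)=0$, forcing $\alpha=0$ by induction, after which $\beta\cdot G(r-1,m-1)=0$ forces $\beta=0$. The dimension then equals $\sum_{i=0}^{r}\binom{m-1}{i}+\sum_{i=0}^{r-1}\binom{m-1}{i}=\sum_{i=0}^{r}\binom{m}{i}$ by Pascal's rule. For part (2), every codeword of $R(r,m)$ has the form $(x,x+y)$ with $x\in R(r,m-1)$ and $y\in R(r-1,m-1)$. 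If $y=0$ and $x\neq 0$, the weight is $2\,\mathrm{wt}(x)\geqslant 2^{m-r}$ by the inductive hypothesis; if $y\neq 0$, the triangle inequality gives $\mathrm{wt}(x)+\mathrm{wt}(x+y)\geqslant \mathrm{wt}(y)\geqslant 2^{m-r}$. Attainment of $2^{m-r}$ follows by lifting a minimum-weight word from $R(r-1,m-1)$ via $x=0$.

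For part (3), the plan is to combine a dimension count with a direct orthogonality check. The dimension identity $\dim R(r,m)+\dim R(m-r-1,m)=\sum_{i=0}^{r}\binom{m}{i}+\sum_{i=r+1}^{m}\binom{m}{i}=2^m$ follows from $\binom{m}{i}=\binom{m}{m-i}$. For orthogonality, I would check all four types of pairings between generator rows of $R(r,m)$ and $R(m-r-1,m)$: pairs of the form $(u,u)\cdot(a,a)=2(u\cdot a)=0$ vanish in characteristic $2$, while the mixed pairs $(u,u)\cdot(0,b)$ and $(0,w)\cdot(a,a)$ reduce to orthogonality relations supplied by the inductive hypothesis applied to $R(r,m-1)$ and $R(r-1,m-1)$ respectively. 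The delicate case, which I expect to be the main obstacle, is $(0,w)\cdot(0,b)=w\cdot b$ with $w\in R(r-1,m-1)$ and $b\in R(m-r-2,m-1)$: the inductive hypothesis only yields $R(r-1,m-1)^{\perp}=R(m-r-1,m-1)$, so I would need the nesting $R(m-r-2,m-1)\subseteq R(m-r-1,m-1)$. I would establish this as a separate monotonicity lemma $R(s,m)\subseteq R(s+1,m)$ by a short independent induction on the recursive construction. Once all four inner products vanish, $R(m-r-1,m)\subseteq R(r,m)^{\perp}$, and the dimension count promotes this inclusion to equality.
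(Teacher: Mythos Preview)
Your inductive proof via the $(u,u+v)$ construction is correct and is the standard textbook argument; in particular, the nesting $R(s,m)\subseteq R(s+1,m)$ that you flag as the delicate point is immediate from the recursive generator matrix, so no real obstacle remains. The paper itself offers no proof of this lemma---it simply cites \cite[Theorem~1.10.1]{Huffman_2010}---so there is nothing in the paper to compare against beyond noting that your argument is essentially the one given in that reference.
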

Because  the dual of a 
Reed-Muller code is again a Reed-Muller code, it follows that $d^{\perp}=2^{m-(m-r-1)}=2^{r+1}$. Therefore, the code $R(r, m)$ is an $\text{OA}(2^k,\, 2^m,\, 2,\, 2^{r+1}-1)$; see \cite[Section 5.8]{Hedayat_1999}).

We are now prepared to relate RM codes to IrOAs.

\begin{Theorem}[IrOAs constructed from Reed-Muller codes]\label{070}\
\begin{enumerate}
    \item[{\rm (1)}] If $m=2r+1$, then $R(r,m)=R(r,m)^{\perp}$ is an $\text{IrOA}(2^{2^{m-1}}, \, 2^m, \, 2, \, 2^{r+1} - 1)$.  
    \item[{\rm (2)}] If $1 \leqslant r \leqslant m < 2r + 1$, then $R(r,m)^{\perp}$ is an $\text{IrOA}(2^{k'},\, 2^m,\, 2,\, 2^{m-r}-1)$, where $k'={\sum_{i=0}^{m-r-1} \binom{m}{i}}$.
    \item[{\rm (3)}] If $m > \max\{2r + 1,\, r + 2\}$, then $R(r,m)$ is an $\text{IrOA}(2^k, \, 2^m, \, 2, \, 2^{r+1}-1)$, where $k={\sum_{i=0}^{r} \binom{m}{i}}$.
\end{enumerate}
\end{Theorem}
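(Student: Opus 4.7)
The plan is to handle all three parts by the same recipe: apply Theorem \ref{sd-thm} (or, in the self-dual case, its specialization Theorem \ref{059}) after comparing the two minimum distances
\[ d = d(R(r,m)) = 2^{m-r}, \qquad d^\perp = d(R(m-r-1,m)) = 2^{r+1} \]
recorded in the preceding Lemma, together with the fact already noted in Section 3 that a linear code $C$ forms an $\text{OA}(q^{\dim C}, n, q, d(C^\perp) - 1)$.

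For part (1), the hypothesis $m = 2r+1$ gives $m - r - 1 = r$, so $R(r,m)^\perp = R(r,m)$; that is, the code is self-dual. Theorem \ref{059} then yields irredundancy directly. Self-duality forces $k = n/2 = 2^{m-1}$ and hence $M = 2^{2^{m-1}}$, while the strength is $t = d^\perp - 1 = 2^{r+1} - 1$ from the general OA structure of a linear code; the minimum distance $d = 2^{r+1} = t+1$ realizes the equality case of Lemma \ref{1}.

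For parts (2) and (3) it suffices to compare the exponents $m-r$ and $r+1$. In part (2), $m < 2r+1$ means $r+1 > m-r$, hence $d^\perp > d$, and Theorem \ref{sd-thm}(1) singles out $R(r,m)^\perp = R(m-r-1,m)$ as the IrOA. Its dimension $k' = \sum_{i=0}^{m-r-1}\binom{m}{i}$ is read off from the Reed-Muller dimension formula, and its OA strength equals $d(R(r,m)) - 1 = 2^{m-r} - 1$ by the same structural fact for linear codes (applied now to $R(r,m)^\perp$, whose dual is $R(r,m)$). Part (3) is symmetric: $m > 2r+1$ gives $m-r > r+1$, so $d > d^\perp$, and Theorem \ref{sd-thm}(1) instead identifies $R(r,m)$ itself as the IrOA, with OA parameters already recorded (strength $2^{r+1}-1$, size $2^k$ where $k = \sum_{i=0}^{r}\binom{m}{i}$). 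The auxiliary condition $m > r+2$ only serves to rule out the low-dimensional edge case (it does not enter the inequality comparison, since for $r \geqslant 1$ it is already implied by $m > 2r+1$).

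There is no genuine obstacle beyond the exponent comparison: once $d$ and $d^\perp$ are written as powers of $2$, Theorem \ref{sd-thm} selects the correct side of the dual pair in each regime, and the remaining step is purely bookkeeping — transcribing the dimension and minimum distance of the chosen side into the $\text{IrOA}(M, n, q, t)$ notation.
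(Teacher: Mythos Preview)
Your proof is correct and follows essentially the same approach as the paper: invoke Theorem~\ref{059} in the self-dual case $m=2r+1$, and in the other two cases compare $d=2^{m-r}$ with $d^\perp=2^{r+1}$ to apply Theorem~\ref{sd-thm}(1). The only difference is that the paper explicitly checks the standing convention $n\geqslant 2t$ in each part---this is precisely the role of the hypotheses $r\geqslant 1$ in (2) and $m>r+2$ in (3)---whereas you largely skip this (correctly observing that for $r\geqslant 1$ the latter is subsumed by $m>2r+1$). Since $n\geqslant 2t$ is in fact a \emph{consequence} of irredundancy (as noted after Definition~\ref{def-irr-oa}), your omission is not a gap.
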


\begin{proof}
 \begin{enumerate}
 \item[{\rm (1)}] A Reed-Muller code $R(r,m)$ is self-dual if and only if  $R(r,m)=R(m - r - 1,m)$, which  holds when $m=2r+1$. In this case, $R(r,m)$ is a binary self-dual code with parameters $[2^m, 2^{m-1}, 2^{r+1}]_2$. By Theorem \ref{059}, it is an $\text{IrOA}(2^{2^{m-1}}, \, 2^m, \, 2, \, 2^{r+1}-1)$. The necessary condition for irredundancy $n\geqslant 2t$ is equivalent to $ 2^m \geqslant 2 ( 2^{r+1} - 1)$, which is easily seen to hold true (with equality only for $m=1$). 
 
 \item[{\rm (2)}] The Reed-Muller code $R(r,m)^{\perp}$ is the orthogonal array $\text{OA}(2^{k'},\, 2^m,\, 2,\, 2^{m-r}-1)$, where $k'={\sum_{i=0}^{m-r-1} \binom{m}{i}}$. Under the condition $m< 2r+1$, we have $d^{\perp}> d$, and hence by Theorem \ref{sd-thm}(2), $R(r,m)^{\perp}$ is an irredundant orthogonal array. Note that the necessary condition $n \geqslant 2t$ for redundancy holds under the condition $r\geqslant 1$.

 \item [{\rm (3)}] Under the condition $m > 2r + 1$, the Reed-Muller code $R(r, m)$ satisfies $d > d^{\perp}$. Therefore, by Theorem \ref{sd-thm}(1), $R(r, m)=\text{OA}(2^k,, 2^m, 2, 2^{r+1} - 1)$ is irredundant. The necessary condition for redundancy holds under the condition $m\geqslant r+2$.
\end{enumerate}
\end{proof}
\subsection{Generalized Reed-Muller codes and IrOAs}

Let $m$ be a positive integer and $r$ an integer with $0 \leqslant r \leqslant m(q - 1)$. Define \[ B_q (r,m) = \left\{ x_1^{e_1} \cdots x_m^{e_m} \;\middle|\; 0 \leqslant e_i < q,\, e_1 + e_2 + \cdots + e_m \leqslant r \right\} \]
as the set of monomials in $\mathbb{F}_q [x_1, x_2, \ldots, x_m]$ of total degree at most $r$, where the degree of each variable is less than $q$. The generalized Reed-Muller (GRM) code, denoted by $R_{q} (r,m)$, is a linear code of length $n = q^m$ defined as (see, e.g., \cite[Definition 5.3]{Assmus_1998} and \cite[Definition 1]{Ding_2000})
\begin{align}\label{095}
    R_{q} (r,m) = \left\langle x_1^{e_1} \cdots x_m^{e_m} \;:\; 0 \leqslant e_i < q,\, e_1 + e_2 + \cdots + e_m \leqslant r \right\rangle=\operatorname{Span}_{\mathbb F_q}(B_q(r, m)).
\end{align}
Let $P_1, P_2, \ldots, P_n$ be the $n = q^m$ points in the affine space $A^m(\mathbb{F}_q)=\mathbb F_q^m$. The codewords of a GRM code are obtained by evaluating polynomials in $B_q(r, m)$ at all the points $P_1, P_2, \ldots, P_n$ (see the explanation following \cite[Definition 1]{Ding_2000}). Thus, the code can be expressed as
\begin{align*}
    R_q (r,m) &= \operatorname{Span}_{\mathbb F_q} \left\{ ( f (P_1), f (P_2), \ldots, f (P_n)) \,:\, f \in B_q (r,m) \right\}\\
    &= \left\{ \left(f(P_1), f(P_2), \ldots, f(P_n)\right) \,:\, f \in \operatorname{Span}_{\mathbb{F}_q}(B_q(r,m)) \right\}.
\end{align*}
As an example, let $ q=2 $, $ m=2 $, so the evaluation points are $ (0,0), (0,1), (1,0), (1,1) $. If $r=1$, we have $B_2 (1,2) =\{1,x_1,x_2\}$. Therefore, one generator matrix of $R_2(1,2)$ is
\begin{align*}
\begin{bmatrix}
  1 & 1 & 1 & 1 \\
  0 & 0 & 1 & 1 \\
  0 & 1 & 0 & 1 \\
  \end{bmatrix}.
\end{align*}
This matrix is row-equivalent to the recursively defined generator matrix $G(1,2)$ given in \eqref{090}. To see that GRM codes over $\mathbb{F}_2$ are the same as the recursively defined Reed–Muller codes, see \cite[Exercise 772]{Huffman_2010}.

The main parameters of the GRM codes follow. 

\begin{Lemma}\cite[Theorems 5.5 and 5.8 and Corollary 5.26]{Assmus_1998}\label{073}
For any GRM code $ R_q (r,m)$, we have 
    \begin{enumerate}
        \item[{\rm (1)}] $\dim(R_{q} (r,m)) = \sum_{i=0}^{m} (-1)^i \binom{m}{i} \binom{m + r - iq}{r - iq}$.
        \item [{\rm (2)}] $R_{q} (r,m)^\perp = R_{q} (m(q - 1) - 1 - r,\, m)$.
        \item [{\rm (3)}]  $R_{q} (r,m)$ has minimum distance $d=(q-b)q^{m-a-1}$, where $a$ and $b$ are the unique integers, defined by $r = a(q - 1) + b$ and $0 \leqslant b < q - 1$.
    \end{enumerate}
\end{Lemma}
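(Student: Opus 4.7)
The plan is to prove each of the three parts separately. The main difficulty will be part (3), the minimum distance formula, while parts (1) and (2) reduce to standard combinatorial and algebraic manipulations.

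For part (1), I would first note that since $x_i^q = x_i$ as a function on $\mathbb{F}_q$, the monomials $x_1^{e_1} \cdots x_m^{e_m}$ with each $0 \leqslant e_i \leqslant q-1$ evaluate to a basis of the $q^m$-dimensional space of functions $\mathbb{F}_q^m \to \mathbb{F}_q$. Hence $\dim R_q(r, m)$ equals the number of tuples $(e_1, \ldots, e_m)$ satisfying $0 \leqslant e_i \leqslant q-1$ and $e_1 + \cdots + e_m \leqslant r$. Starting from the count $\binom{m+r}{m}$ of unconstrained non-negative solutions to $\sum e_i \leqslant r$, a standard inclusion-exclusion on the violations $\{e_j \geqslant q\}$ yields $\sum_{i=0}^{m} (-1)^i \binom{m}{i} \binom{m+r-iq}{r-iq}$, using the binomial symmetry $\binom{m+r-iq}{m} = \binom{m+r-iq}{r-iq}$.

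For part (2), I would compute the Euclidean inner product of two evaluation vectors arising from monomials $x^\alpha$ and $x^\beta$ with each $\alpha_i, \beta_i \leqslant q-1$. This inner product factors as $\prod_{i=1}^{m} \bigl( \sum_{u \in \mathbb{F}_q} u^{\alpha_i + \beta_i} \bigr)$, and each inner sum vanishes unless $\alpha_i + \beta_i$ is a positive multiple of $q-1$. If $|\alpha| \leqslant r$ and $|\beta| \leqslant m(q-1) - 1 - r$, then $|\alpha| + |\beta| \leqslant m(q-1) - 1 < m(q-1)$, so at least one $\alpha_i + \beta_i$ fails to be a positive multiple of $q-1$ (otherwise the total would be at least $m(q-1)$). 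This forces the inner product to vanish, proving $R_q(r, m) \subseteq R_q(m(q-1)-1-r, m)^\perp$. A dimension check via part (1) — the two dimensions sum to $q^m$ — upgrades the inclusion to the claimed duality.

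The main obstacle is part (3). For the upper bound on $d$, I would exhibit an explicit low-weight codeword
\[ f = \prod_{i=1}^{a} \bigl( 1 - (x_i - \gamma_i)^{q-1} \bigr) \cdot \prod_{j=1}^{b} (x_{a+1} - \delta_j), \]
with pairwise distinct $\delta_j \in \mathbb{F}_q$ and arbitrary $\gamma_i \in \mathbb{F}_q$. The total degree is $a(q-1) + b = r$ and every individual exponent is at most $q-1$, so $f \in R_q(r, m)$. The first product vanishes outside the $q^{m-a}$ points having $x_i = \gamma_i$ for $i \leqslant a$, and among those points the second factor vanishes at exactly $b \cdot q^{m-a-1}$, yielding weight $(q-b)q^{m-a-1}$. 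The matching lower bound is the delicate piece: I would proceed by induction on $m$, splitting any nonzero $f \in R_q(r, m)$ into $q$ layers according to the value of $x_1$, each layer lying in a GRM code in $m-1$ variables whose degree bound is controlled by the order of vanishing of $f$ in $x_1$. Careful bookkeeping of how many layers can be identically zero against the residual degree in the surviving layers, combined with the inductive hypothesis, produces the bound $(q-b)q^{m-a-1}$ from below. Matching the two bounds closes the proof.
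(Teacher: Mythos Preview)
The paper does not supply its own proof of this lemma: it is stated as a citation of \cite[Theorems 5.5 and 5.8 and Corollary 5.26]{Assmus_1998} and used as a black box. So there is no proof in the paper to compare against; you have gone beyond what the paper does by sketching an actual argument.

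That said, your sketch is essentially the standard route taken in the cited reference and elsewhere. Part (1) is correct. In part (2), your character-sum computation is right, though you might make explicit that the case $\alpha_i+\beta_i=0$ also kills the product (the factor is then $q$, which is zero in $\mathbb{F}_q$); your phrasing ``positive multiple of $q-1$'' already excludes this case, so the logic survives, but a reader may pause there. The dimension complement follows cleanly from the involution $(e_i)\mapsto(q-1-e_i)$ on reduced monomials. For part (3), your explicit codeword correctly realises weight $(q-b)q^{m-a-1}$, and the inductive lower bound you outline is the classical one; as you acknowledge, the bookkeeping in that induction (tracking how many $x_1$-slices vanish identically versus the degree drop in the surviving slices) is where the real work lies and is only gestured at here. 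If you were to write this up in full, that is the place demanding care; otherwise the proposal is sound.
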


We are now in a position to relate the GRM codes and the IrOAs.

\begin{Theorem}[IrOAs constructed from generalized Reed-Muller codes] \label{thm-grm}
Let $r = a(q - 1) + b$, where $a$ and $b$ are the unique integers, defined by $r = a(q - 1) + b$ and $0 \leqslant b < q - 1$, and $r'=m(q-1)-1-r$.
    \begin{enumerate}
        \item [{\rm (1)}] If $m(q-1)=2r+1$, then $R_q(r,m)=R_q(r,m)^{\perp}$ is an $\text{IrOA}(q^{q^{m-1}}, \, q^m, \, q, \, d - 1)$, where $d=(q-b)q^{m-a-1}$.
        \item [{\rm (2)}] If $ m(q - 1) < 2r + 1 $ and $q^{a + 1} \geqslant 2(q - b)$, then $R_q(r,m)^{\perp}$ is an $\text{IrOA}(q^k, \, q^m, \, q, \, d - 1)$, where $d=(q-b)q^{m-a-1}$  and $k= \sum_{i=0}^{m} (-1)^i \binom{m}{i} \binom{m + r' - iq}{r' - iq}$.
        \item [{\rm (3)}] If $ m(q - 1) > 2r + 1 $ and $ q^{m-a} \geqslant 2(b + 2) $, then $ R_q(r, m) $ is an $ \text{IrOA}(q^k, \, q^m, \, q, \, d^\perp - 1) $, where $ d^\perp = (b + 2)q^{a} $ and $ k = \sum_{i=0}^{m} (-1)^i \binom{m}{i} \binom{m + r - iq}{r - iq} $.    
    \end{enumerate}
\end{Theorem}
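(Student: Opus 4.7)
The plan is to mirror the proof of Theorem \ref{070}, now leveraging the duality formula of Lemma \ref{073}(2) together with the minimum-distance formula of Lemma \ref{073}(3). The essential preliminary step is to put the dual parameter $r' := m(q-1) - 1 - r$ into its canonical form $r' = a'(q-1) + b'$ with $0 \leqslant b' < q-1$. Starting from $r = a(q-1) + b$, a short rewrite gives $a' = m-a-1$ and $b' = q-2-b$, which lies in $[0, q-2]$ since $0 \leqslant b \leqslant q-2$. Substituting into Lemma \ref{073}(3) then yields
\[
d^\perp = (q - b')\,q^{m-a'-1} = (b+2)\,q^{a}
\]
for $R_q(r, m)^\perp = R_q(r', m)$, after which all three parts become routine comparisons.

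For Part (1), the hypothesis $m(q-1) = 2r+1$ forces $r = r'$, so Lemma \ref{073}(2) identifies $R_q(r,m)$ with its dual; in particular $d = d^\perp$. Theorem \ref{059} then produces the claimed IrOA, and the necessary condition $n \geqslant 2t$ reduces to $q^m \geqslant 2((q-b)q^{m-a-1} - 1)$, which is easily verified.

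For Parts (2) and (3), the hypothesis $m(q-1) < 2r+1$ (resp.\ $>2r+1$) is equivalent to $r > r'$ (resp.\ $r < r'$). The monotone inclusion between $R_q(r,m)$ and $R_q(r',m) = R_q(r,m)^\perp$, immediate from $B_q(\min\{r,r'\},m) \subseteq B_q(\max\{r,r'\},m)$ and the span construction \eqref{095}, yields $d \leqslant d^\perp$ in Part (2) and $d \geqslant d^\perp$ in Part (3). Applying Theorem \ref{sd-thm}(1) singles out $R_q(r,m)^\perp$ in Part (2), respectively $R_q(r,m)$ in Part (3), as the irredundant orthogonal array; the claimed dimension $k$ is then read off from Lemma \ref{073}(1) with argument $r'$ and $r$ respectively.

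The remaining bookkeeping is the verification of $n \geqslant 2t$ in the non-self-dual cases, and this is the one place where a little care is needed. In Part (2) the strength is $t = d - 1 = (q-b)q^{m-a-1} - 1$, so $q^m \geqslant 2t$ rearranges to $q^{a+1} \geqslant 2(q-b) - 2/q^{m-a-1}$, which is implied by the stated hypothesis $q^{a+1} \geqslant 2(q-b)$. In Part (3), $t = d^\perp - 1 = (b+2)q^{a} - 1$, and $q^m \geqslant 2t$ similarly reduces to $q^{m-a} \geqslant 2(b+2) - 2/q^{a}$, implied by $q^{m-a} \geqslant 2(b+2)$. The only subtlety worth flagging is checking that the dual decomposition remains valid at the boundary $b = q-2$ (where $b' = 0$), so that Lemma \ref{073}(3) applies uniformly in each part.
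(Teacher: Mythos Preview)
Your proof is correct, and for Parts (2) and (3) you take a genuinely cleaner route than the paper. The paper compares $d$ and $d^\perp$ numerically: it rewrites the hypothesis $m(q-1) \lessgtr 2r+1$ as a bound on $m-2a$, splits into the cases $m-2a=1$ and $m-2a\leqslant 0$ (resp.\ $m-2a\geqslant 2$), and in each case manipulates the explicit formulas $d=(q-b)q^{m-a-1}$ and $d^\perp=(b+2)q^a$ to conclude $d^\perp>d$ (resp.\ $d>d^\perp$). Your argument bypasses all of this: since $m(q-1)\lessgtr 2r+1$ is literally $r'\lessgtr r$, the nesting $R_q(\min\{r,r'\},m)\subseteq R_q(\max\{r,r'\},m)$ from the span description \eqref{095} gives $d\leqslant d^\perp$ or $d\geqslant d^\perp$ in one line, and the non-strict inequality is already enough for Theorem~\ref{sd-thm}(1). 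This is both shorter and more structural; the paper's approach has the slight advantage of producing a \emph{strict} inequality and of making the quantities $d$, $d^\perp$ fully explicit along the way, but for the stated theorem that extra information is not used.

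For Part (1) your argument matches the paper's. One remark: your phrase ``which is easily verified'' for the bound $q^m\geqslant 2((q-b)q^{m-a-1}-1)$ glosses over a step the paper carries out in detail, namely that $m(q-1)=2r+1$ forces $m=2a+1$ and $2b+1=q-1$, after which the inequality becomes $q^{2a+1}\geqslant (q+2)q^a-2$. This is indeed routine, but worth a sentence.
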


\begin{proof}
    \begin{enumerate}
        \item [{\rm (1)}] The proof is similar to the proof of Theorem \ref{070}(1). We only need to verify the necessary condition for irredundancy, namely $n \geqslant 2t$, or equivalently, $q^m\geqslant 2\left((q-b)q^{m-a-1}-1\right)$.  Substituting $ r = a(q - 1) + b $ into the assumtion $m(q-1)=2r+1$ implies $(m - 2a)(q - 1) = 2b + 1$. We have $(2b+1)/(q-1) \in \big[1/(q-1),2+1/(q-1)\big)$ since $ 0 \leqslant b < q - 1 $. The case $(2b+1)/(q-1) \in \big[2,2+1/(q-1)\big)$ cannot occur, because an even number cannot lie between two consecutive odd numbers. Therefore,  $(2b+1)/(q-1) \in \big[1/(q-1),2\big)$, and hence the integer number $m-2a$ lies in the interval $ \big[1/(q-1),2\big)$. Thus, $m-2a=1$, which leads to $2b+1=q-1$. Plugging $m = 2a + 1$ into the inequality $q^m \geqslant 2\left((q - b)q^{m - a - 1} - 1\right)$, we obtain $q^{2a + 1} \geqslant (q + 2)q^a - 2$, which always holds.

        \item [{\rm (2)}] Applying $ r = a(q - 1) + b $, we get $ r'= (m - a - 1)(q - 1) + (q - b - 2)$. Define $ a' := m - a - 1 $ and $ b' := q - b - 2 $. Since $ 0 \leqslant b < q - 1 $, we obtain $ 0 \leqslant b' < q - 1 $.
        Applying Lemma \ref{073} with these values of $a'$, $b'$, $r'$, we obtain
        \begin{align}\label{076}
        d = (q - b)q^{m-a-1},\,\,\, d^\perp = (b + 2)q^{a}.  
        \end{align}\label{078}
        Substituting $ r = a(q - 1) + b $ in the assumption $ m(q - 1) < 2r + 1 $ implies $(m - 2a)(q - 1) < 2b + 1$. Applying the interval $(2b+1)/(q-1) \in \big[1/(q-1),2\big)$ from the proof of the first statement, we obtain $m-2a<(2b+1)/(q-1)<2$. Since $m-2a$ is an integer, we have $m-2a\leqslant 1$.
        If $m=2a+1$, then applying $(m - 2a)(q - 1) < 2b + 1$, we get $q - b < b + 2$. Thus applying \ref{076}, $d^\perp>d$. Finally, if $m \leqslant 2a$, then 
        \[ \frac{d^\perp}{d} \geqslant \frac{b+2}{q-b} \cdot q =\frac{q(b+2)}{q-b} \geqslant b+2>1. \]

        Hence, by Theorem \ref{sd-thm}, $R_q(m,r)^\perp=\text{OA}(2^{k'},\, 2^m,\, 2,\, 2^{m-r}-1)$ is irredundant. To complete the proof, we need to verify the necessary condition for irredundancy, which is true by the assumption $q^{a + 1} \geqslant 2(q - b)$. 
        
        \item [{\rm (3)}] Substituting $ r = a(q - 1) + b $ in the assumption $ m(q - 1) > 2r + 1 $ implies $(m - 2a)(q - 1) > 2b + 1$. Therefore, $m-2a>(2b + 1)/(q-1)\geqslant 1/(q-1)>0$, and since $m-2a$ is integer, we have $m-2a\geqslant 1$. We conclude that $m = 2a + 1 + t$ for some integer $t \geqslant 0$.
Substituting $m = 2a + 1 + t$ in \eqref{076}, 
\begin{align*}
    \frac{d}{d^\perp} \geqslant \frac{q - b}{b + 2}\cdot q^t.
\end{align*}
If $t = 0$, then applying $(m - 2a)(q - 1) > 2b + 1$, we get $q - b > b + 2$. Thus $d>d^{\perp}$. Finally, if $t \geqslant 1$, then applying $0 \leqslant b < q - 1$, we derive $q - b \geqslant 2$, $b + 2 < q + 1$, and $(q - b)q^{t} \geqslant 2q>q + 1 > b + 2$. Thus $d>d^{\perp}$. Therefore, by Theorem \ref{sd-thm}, $R_q(m,r)= \text{OA}(q^k, \, q^m, \, q, \, d^\perp - 1) $ is irredundant. The necessary condition for redundancy holds using the assumption $ q^{m-a} \geqslant 2(b + 2) $.
    \end{enumerate}
\end{proof}

We remark that the exceptions for the assumptions $q^{a+1}>2(q-b)$ and $q^{m-a}>2(b+2)$ in Theorem~\ref{thm-grm}(2) and (3), respectively, lead to small trivial cases, making the statement similar to the binary case. 
\subsection{MDS codes and IrOAs}

Recall that the Singleton bound for an $[n, k, d]_q$ code states that $d \leqslant n - k + 1$. A code that meets this bound, i.e., $d = n - k + 1$, is called a Maximum Distance Separable (MDS) code. For a recent survey on MDS we refer to \cite{Ball2020}. There are limitations on the existence of MDS codes and we refer to the well-known MDS Conjecture (related to a question from 1955) which still remains unsolved in some cases.

MDS Conjecture: If $k \leqslant q$, then a linear $[n, k, n - k + 1]_q$ MDS code exists exactly when $n \leqslant q + 1$, unless $q = 2^h$ and $k = 3$ or $k = q - 1$, in which case it exists exactly when $n \leqslant q + 2$; see, e.g., \cite[Conjecture 3.3.21]{Bou2021}.

A linear code $C$ over a finite field is MDS if and only if its dual $C^\perp$ is MDS (see, e.g., \cite[Theorem 2.4.3]{Huffman_2010}). This paves the way for IrOAs relationships. 

\begin{Theorem}\label{063}
    Let $C$ be an MDS $[n, k, n-k+1]_q$ linear code. Then
    \begin{enumerate}
        \item[{\rm (1)}] $C$ is an $\text{OA}(q^{k}, n, q, k)$ with minimum distance $d=n-k+1$.
        \item[{\rm (2)}] $C^{\perp}$ is an $\text{OA}(q^{n-k}, n, q, n-k)$ with minimum distance $d^\perp=k+1$.
        \item[{\rm (3)}] $C$ is irredundant if and only if $n\geqslant 2k$. 
        \item[{\rm (4)}] $C^\perp$ is irredundant if and only if $n\leqslant 2k$.  
    \end{enumerate}
\end{Theorem}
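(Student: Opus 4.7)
The plan is to combine two ingredients already established in the excerpt: first, the duality fact that a linear MDS $[n,k,n-k+1]_q$ code has an MDS dual with parameters $[n,n-k,k+1]_q$ (quoted just above the theorem), and second, Theorem \ref{sd-thm}, which characterizes irredundancy purely in terms of the comparison between $d$ and $d^\perp$. Everything else is bookkeeping of the standard identifications between linear codes and orthogonal arrays.

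For parts (1) and (2), I would start from the general principle recalled at the beginning of Section 3: a linear $[n,k,d]_q$ code forms an $\text{OA}(q^k, n, q, d^\perp - 1)$. Since $C$ is MDS with $d = n-k+1$, and the MDS dual $C^\perp$ has minimum distance $d^\perp = (n) - (n-k) + 1 = k+1$, substituting gives strength $t = d^\perp - 1 = k$, yielding that $C$ is an $\text{OA}(q^k, n, q, k)$ with minimum distance $n - k + 1$. Part (2) is obtained by swapping the roles of $C$ and $C^\perp$: $C^\perp$ is a linear $[n, n-k, k+1]_q$ code, so it gives an $\text{OA}(q^{n-k}, n, q, d-1) = \text{OA}(q^{n-k}, n, q, n-k)$ with minimum distance $k+1$.

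For parts (3) and (4), I would invoke Theorem \ref{sd-thm}(1) directly. Since $d = n - k + 1$ and $d^\perp = k + 1$, the condition $d \geqslant d^\perp$ for $C$ to be irredundant simplifies to $n - k + 1 \geqslant k + 1$, i.e., $n \geqslant 2k$. Symmetrically, $C^\perp$ is irredundant if and only if $d^\perp \geqslant d$, which rearranges to $n \leqslant 2k$. The equality case $n = 2k$, where both codes are simultaneously irredundant, recovers the MDS self-dual situation and is consistent with Theorem \ref{059}.

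I do not expect any genuine obstacle here: the proof is essentially an application of the MDS-dual identity together with Theorem \ref{sd-thm}. The only item to be mindful of is keeping the roles of $(k, d, t)$ and $(n-k, d^\perp, t^\perp)$ straight when translating between code and OA parameters, so that the arithmetic $d^\perp - 1 = k$ and $d - 1 = n - k$ is applied to the correct side.
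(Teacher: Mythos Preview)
Your proposal is correct and follows essentially the same approach as the paper: both derive parts (1) and (2) by combining the MDS-dual fact $d^\perp = k+1$ with the standard code-to-OA dictionary $t = d^\perp - 1$, and both reduce (3)--(4) to a one-line check. The only cosmetic difference is that for (3) and (4) the paper observes that $t=k$ forces index $\lambda = M/q^t = 1$ and then cites Lemma~\ref{046} (together with the standing requirement $n\geqslant 2t$), whereas you go through Theorem~\ref{sd-thm} and the inequality $d\geqslant d^\perp$; your route is arguably cleaner because it gives the ``if and only if'' in one stroke without invoking the $n\geqslant 2t$ convention separately.
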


\begin{proof}
As already said several times, $C$ is an $\text{OA}(q^{k}, n, q, t)$ with $t=d^\perp-1$. Since both $C$ and $C^\perp$ are MDS, we have $d=n-k+1$ and $d^{\perp}=k+1$, respectively, for their minimum distances. Therefore, $t=d^{\perp}-1=k$ and $t^\perp=n-k$, respectively, which proves (1) and (2). The equalities $t=k$ and $t^\perp=n-k$ and Lemma \ref{046} imply (3) and (4), respectively. 
\end{proof}
As observed, any self-dual linear code over $\mathbb{F}_q$ with parameters $[n, k, d]_q$ must satisfy $k = \tfrac{n}{2}$. Furthermore, for any MDS code over $\mathbb{F}_q$, the minimum distance satisfies $d = n - k + 1$. Combining these two facts, it follows that the parameters of a self-dual MDS code are entirely determined by its length $n$, and take the form $[n, \tfrac{n}{2}, \tfrac{n}{2} + 1]_q$. The following theorem follows immediately from Theorem~\ref{063}.

\begin{Theorem}\label{083}
 Every MDS self-dual $[n, \tfrac{n}{2}, \tfrac{n}{2} + 1]_q$ linear code is an $\text{IrOA}(q^{n/2}, n, q,\tfrac{n}{2})$.
\end{Theorem}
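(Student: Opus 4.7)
The plan is to deduce this immediately from Theorem~\ref{063}. First I would observe that combining the self-dual constraint $k = n/2$ with the MDS identity $d = n - k + 1$ forces the parameters to be exactly $[n,\tfrac{n}{2},\tfrac{n}{2}+1]_q$, so the hypotheses of Theorem~\ref{063} apply with $k = n/2$. Part~(1) of that theorem then identifies $C$ with the orthogonal array $\text{OA}(q^{n/2}, n, q, n/2)$ of minimum distance $d = n/2 + 1$, which already matches the $M$, $n$, $q$, and $t$ parameters asserted in the statement.

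Next, to upgrade this OA to an IrOA, I would invoke Theorem~\ref{063}(3), which says that $C$ is irredundant if and only if $n \geqslant 2k$. Under self-duality, $k = n/2$, so this condition collapses to the trivial equality $n \geqslant n$ and is automatically satisfied. Hence $C$ is an $\text{IrOA}(q^{n/2}, n, q, n/2)$, as claimed.

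There is essentially no obstacle here: the result is a direct specialization of Theorem~\ref{063} to the intersection of the MDS and self-duality hypotheses. The only point worth flagging is that we land in the extremal regime $n = 2t$ studied in Theorem~\ref{047}, which is consistent with the values $\lambda = 1$ and $d = t + 1 = n/2 + 1$ that appear here; in that sense, Theorem~\ref{083} can also be read as a witness that the extremal case $n = 2t$ of Theorem~\ref{047} is realized by every MDS self-dual code.
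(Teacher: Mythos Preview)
Your proposal is correct and mirrors the paper's own argument, which simply states that the theorem follows immediately from Theorem~\ref{063}. Your explicit unpacking of parts~(1) and~(3) with $k=n/2$ (so that $n\geqslant 2k$ is automatic) is exactly the intended specialization, and the additional remark connecting to the extremal case of Theorem~\ref{047} is accurate though not needed for the proof.
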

\subsection{Generalized Reed-Solomon codes and IrOAs}

Generalized Reed-Solomon codes (GRS) are an important class of MDS codes. One description is the following. Suppose that $n$ is an integer satisfying $1 \leqslant n \leqslant q$, and let $a = (\alpha_1, \alpha_2, \dots, \alpha_n)$ be an $n$-tuple of distinct elements of $\mathbb{F}_q$, and $v = (v_1, v_2, \dots, v_n)$ be an $n$-tuple of non-zero (not necessarily distinct) elements of $\mathbb{F}_q$. The generalized Reed–Solomon (GRS) code of dimension $k$ associated with $a$ and $v$ is defined by
\begin{align*}
    \text{GRS}_k(a, v) := \{(v_1f(\alpha_1), v_2f(\alpha_2), \dots, v_nf(\alpha_n)) : f(x) \in \mathbb{F}_q[x],\, \deg f \leqslant k - 1\}.
\end{align*}
When $v_i = 1$ for all $i$, the code is called a Reed–Solomon (RS) code. One well-known generator matrix of $\mathrm{GRS}_k(a, v)$ is
\begin{align*}
    G =
\begin{bmatrix}
v_1 & v_2 & \cdots & v_{n} \\
v_1\alpha_1 & v_2\alpha_2 & \cdots & v_{n}\alpha_{n-1} \\
v_1\alpha_1^2 & v_2\alpha_2^2 & \cdots & v_{n}\alpha_{n-1}^2 \\
\vdots & \vdots & & \vdots \\
v_1\alpha_1^{k-1} & v_2\alpha_2^{k-1} & \cdots & v_{n}\alpha_{n-1}^{k-1}
\end{bmatrix}.
\end{align*}
The code $\mathrm{GRS}_k(a, v)$ is an $[n, k, n - k + 1]$ MDS code (see, e.g., \cite[Chapter 5]{Huffman_2010}). Its dual is given by $\text{GRS}_k(a, v)^\perp =\text{GRS}_{n-k}(a, u)$ for some $u$; see \cite[Theorem 4, Chapter 10]{MacWilliams_1977} and \cite[Theorem 5.3.3]{Huffman_2010}.
Therefore, the relation between GRS codes and IrOAs is given as follows.

\begin{Corollary}
Let $1 \leqslant n \leqslant q$. Then
\begin{enumerate}
    \item[{\rm (1)}] $\text{GRS}_k(a, v)$ is an $\text{IrOA}(q^{k}, n, q, k)$ if $n \geqslant 2k$. 
     \item[{\rm (2)}] $\text{GRS}_k(a, v)^{\perp}$ is an $\text{IrOA}(q^{n-k}, n, q, n-k)$ if $n \leqslant 2k$.
\end{enumerate}
\end{Corollary}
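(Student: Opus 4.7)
The plan is to derive the Corollary as an immediate consequence of Theorem \ref{063}, using only the two standard facts recalled in the paragraph preceding the Corollary: a GRS code $\mathrm{GRS}_k(a,v)$ is an MDS $[n,k,n-k+1]_q$ code, and its Euclidean dual is again a GRS code of the same length, namely $\mathrm{GRS}_{n-k}(a,u)$ for a suitable $u$, hence also MDS with parameters $[n,n-k,k+1]_q$.

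First, for part (1), I would observe that because $\mathrm{GRS}_k(a,v)$ is MDS with parameters $[n,k,n-k+1]_q$, Theorem \ref{063}(1) applies verbatim and gives an $\text{OA}(q^k,n,q,k)$. Then Theorem \ref{063}(3) states that this OA is irredundant precisely when $n \geqslant 2k$, which is exactly the hypothesis we assume; so under this hypothesis the code yields an $\text{IrOA}(q^k,n,q,k)$. The assumption $1 \leqslant n \leqslant q$ is only required to guarantee the existence of $n$ distinct evaluation points $\alpha_i \in \mathbb{F}_q$, so the GRS construction is well-defined.

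For part (2), I would apply the same argument to $\mathrm{GRS}_k(a,v)^\perp$. Since the dual is itself a GRS code (namely $\mathrm{GRS}_{n-k}(a,u)$), it is MDS with parameters $[n,n-k,k+1]_q$. By Theorem \ref{063}(2) it is an $\text{OA}(q^{n-k},n,q,n-k)$, and by Theorem \ref{063}(4) it is irredundant exactly when $n \leqslant 2k$, which is the hypothesis in (2). Combining these observations produces the claimed $\text{IrOA}(q^{n-k},n,q,n-k)$.

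There is no real obstacle here; the Corollary is a specialization of Theorem \ref{063} to the MDS family of GRS codes, and the proof amounts to citing Theorem \ref{063}(1)\&(3) for the primal statement and Theorem \ref{063}(2)\&(4) for the dual statement. The only small point worth spelling out is that the duality of GRS codes guarantees that Theorem \ref{063} applies to the dual without any additional hypothesis, so the two items of the Corollary are genuinely symmetric and together cover all admissible $k$ except possibly the boundary $n = 2k$, where both hold simultaneously and the corresponding $\text{IrOA}$ has $\lambda = 1$ in accordance with Theorem \ref{047}.
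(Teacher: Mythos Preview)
Your proposal is correct and matches the paper's intended approach: the Corollary is stated without proof precisely because it is the direct specialization of Theorem~\ref{063} to the MDS family of GRS codes, exactly as you argue. Your additional remarks about the role of $1 \leqslant n \leqslant q$ and the overlap at $n=2k$ are accurate but not needed for the bare derivation.
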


\begin{Corollary}
Suppose that $ n $ is even and $n\leqslant q$. If $ n $ and $ q $ satisfy one of the following two conditions, then $ \mathrm{GRS}_k(a, v) $ is an $ \mathrm{IrOA}(q^{n/2}, n, q, \tfrac{n}{2}) $.
\begin{enumerate}
    \item [{\rm (1)}] $q$ is even.
    \item [{\rm (2)}] $q= 1 \pmod 4$, and $q \geqslant 4^n\times n^2$.
\end{enumerate}    
\end{Corollary}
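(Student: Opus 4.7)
The plan is to reduce the corollary to Theorem~\ref{083}: every MDS self-dual $[n, n/2, n/2+1]_q$ code is an $\text{IrOA}(q^{n/2}, n, q, n/2)$. Since every $\text{GRS}_k(a,v)$ is MDS and the target parameters force $k = n/2$, it suffices to exhibit, under each hypothesis, a choice of parameters $(a,v)$ for which $\text{GRS}_{n/2}(a,v)$ is self-dual; the conclusion then follows at once from Theorem~\ref{083}.

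Using the cited duality identity $\text{GRS}_k(a,v)^\perp = \text{GRS}_{n-k}(a,u)$, where $u$ is the explicit scaling tuple depending on $a$ and $v$, one derives (by matching scaling tuples when $k = n/2$) the standard self-duality criterion: $\text{GRS}_{n/2}(a,v)$ is self-dual if and only if there exists $\lambda \in \mathbb{F}_q^*$ such that
\[
v_i^2 \;=\; \lambda \prod_{j\neq i}(\alpha_i - \alpha_j)^{-1} \qquad (1 \leqslant i \leqslant n).
\]
Writing $u_i := \prod_{j\neq i}(\alpha_i - \alpha_j)^{-1}$, this is equivalent to requiring that all $u_i$ lie in a common coset of the subgroup of squares $(\mathbb{F}_q^*)^2$ in $\mathbb{F}_q^*$; once such a tuple $a = (\alpha_1, \ldots, \alpha_n)$ is fixed, $v_i$ is defined as a square root of $\lambda^{-1} u_i$ for a representative $\lambda$ of that coset.

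For (1), when $q$ is even the Frobenius map $x \mapsto x^2$ is a bijection on $\mathbb{F}_q^*$, so every element is a square and the square-class condition is vacuous. Any $n$ distinct $\alpha_i \in \mathbb{F}_q$, which exist since $n \leqslant q$, yield a self-dual $\text{GRS}_{n/2}(a,v)$ with $\lambda = 1$ and $v_i = \sqrt{u_i}$. Applying Theorem~\ref{083} completes case (1).

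For (2), with $q \equiv 1 \pmod 4$, let $\chi$ denote the quadratic character of $\mathbb{F}_q$; here $-1 \in (\mathbb{F}_q^*)^2$. The task becomes finding $n$ distinct $\alpha_i$ with $\chi(u_i) = 1$ for every $i$. Counting such tuples via the indicator $\prod_i \tfrac{1+\chi(u_i)}{2}$ and expanding produces a main term of size $q^n/2^n$ together with $2^n - 1$ error terms, each a sum of $\chi$ evaluated at a product of differences $\alpha_i - \alpha_j$ of controlled degree in the variables. The Weil bound on multiplicative character sums of polynomials of degree at most $n-1$ estimates each error term by $O(n\sqrt{q})$; the hypothesis $q \geqslant 4^n n^2$ is precisely the threshold needed to ensure that the main term dominates, yielding a positive count and hence the existence of the required tuple $a$. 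Once $a$ is chosen, $v$ is defined as above, producing an MDS self-dual $\text{GRS}_{n/2}(a,v)$ and the desired IrOA via Theorem~\ref{083}. The principal obstacle is the character-sum estimate in case (2); alternatively, one may simply cite existing constructions of MDS self-dual GRS codes (e.g., of Jin--Xing type) that succeed under essentially this hypothesis.
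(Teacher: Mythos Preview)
Your proof is correct and follows essentially the same route as the paper: reduce to Theorem~\ref{083} by exhibiting an MDS self-dual $\mathrm{GRS}_{n/2}(a,v)$ under each hypothesis. The paper does this in one line by invoking \cite[Theorem~3.2]{Jin_2016}, whereas you unpack the self-duality criterion and sketch the Jin--Xing character-sum argument yourself (and then note the citation option at the end); the underlying strategy is identical.
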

\begin{proof}
If either condition holds, then by \cite[Theorem 3.2]{Jin_2016}, the code $ \mathrm{GRS}_k(a, v) $ is MDS self-dual. The result then follows from Theorem~\ref{083}.
\end{proof}
\section{Bounds on minimum distance and covering radius of IrOAs}

\subsection{Bounds on the minimum distance}

Since an OA$(M,n,q,t)$ has index $\lambda=1$ if and only if 
its minimum distance is $d=n-t+1$, it is straightforward to conclude that $d \leqslant n-t$ whenever $\lambda>1$. One next step 
was done in \cite{Boyvalenkov_2022}.

\begin{Theorem}\cite[Theorem IV.5]{Boyvalenkov_2022}\label{3}
Suppose that $\text{OA}(M, n, q, t)$ has index $\lambda > 1$
and $n-t>\lambda(q-1)/(\lambda-1)$. Then $d \leqslant n - t-1$.
\end{Theorem}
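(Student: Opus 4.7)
The plan is proof by contradiction. By Lemma~\ref{15} together with the hypothesis $\lambda>1$, the bound $d\leqslant n-t$ is already available (as stated in the paragraph preceding the theorem), so it suffices to rule out the case $d=n-t$. Suppose then, toward a contradiction, that $d=n-t$.

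Fix any codeword $x_0\in C$ and, for each $y\in C$, let $\tau(y)=|\{i:y_i=(x_0)_i\}|$, so that $d(x_0,y)=n-\tau(y)$. The assumption $d=n-t$ concentrates the agreement spectrum: $\tau(x_0)=n$, and $\tau(y)\in\{0,1,\ldots,t\}$ for every $y\neq x_0$ (since $d>0$ forces all rows of $C$ to be distinct, so $\tau(y)=n$ only at $y=x_0$). For $j\in\{0,1,\ldots,t\}$ set
\[
c_j=\bigl|\{y\in C\setminus\{x_0\}: \tau(y)=j\}\bigr|.
\]

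The main tool will be the family of moment identities
\[
\sum_{y\in C}\binom{\tau(y)}{k}=\binom{n}{k}\lambda q^{\,t-k},\qquad k=0,1,\ldots,t,
\]
obtained by double-counting pairs $(y,K)$ with $|K|=k$ and $K\subseteq\{i:y_i=(x_0)_i\}$: for each fixed $K$ the strength-$t$ property contributes exactly $M/q^k=\lambda q^{t-k}$ codewords. Separating the $x_0$-term $\binom{n}{k}$ and specializing to $k=t$ yields $c_t=\binom{n}{t}(\lambda-1)$, while $k=t-1$ gives $c_{t-1}+tc_t=\binom{n}{t-1}(\lambda q-1)$. Substituting $c_t$ and using the identity $t\binom{n}{t}=(n-t+1)\binom{n}{t-1}$, the expected simplification is
\[
c_{t-1}=\binom{n}{t-1}\bigl[\lambda(q-1)-(\lambda-1)(n-t)\bigr].
\]

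The conclusion is then immediate: $c_{t-1}$ is a non-negative integer, so the bracket must be $\geqslant 0$, which rearranges (using $\lambda>1$) to $n-t\leqslant\lambda(q-1)/(\lambda-1)$, contradicting the hypothesis. Hence $d\neq n-t$, and together with $d\leqslant n-t$ this yields $d\leqslant n-t-1$. The only delicate step is the arithmetic simplification producing the clean form of $c_{t-1}$ that matches the threshold in the hypothesis exactly; the combinatorial setup is routine once one observes that $d=n-t$ collapses the $\tau$-spectrum onto $\{0,1,\ldots,t\}\cup\{n\}$, reducing the infinite family of moment identities to a tight finite linear system.
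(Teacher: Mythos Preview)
Your argument is correct. The moment identities you use are exactly the standard OA distance-distribution equations (sometimes called the Delsarte or Rao identities), and the arithmetic leading to
\[
c_{t-1}=\binom{n}{t-1}\bigl[\lambda(q-1)-(\lambda-1)(n-t)\bigr]
\]
checks out line by line.

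The present paper does not supply its own proof of this theorem; it is quoted from \cite[Theorem IV.5]{Boyvalenkov_2022}, whose very title (``Bounds for the minimum distance and covering radius of orthogonal arrays via their distance distributions'') indicates that the original argument proceeds through the distance distribution as well. Your proof is therefore essentially the intended one: assume $d=n-t$, observe that the inner distribution about any fixed codeword is supported on $\{0,1,\ldots,t\}\cup\{n\}$, solve the top two moment equations for $c_t$ and $c_{t-1}$, and read off the constraint from $c_{t-1}\geqslant 0$. There is nothing to compare beyond noting that you have reproduced the standard route.
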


\begin{Corollary}\label{00}
For any $\text{IrOA}(M, n, q, t)$ with index $\lambda > 1$, we have
\begin{enumerate}
    \item[{\rm (1)}]  $t+1\leqslant   d \leqslant n - t$.
    \item[{\rm (2)}]  $t+1\leqslant   d\leqslant n - t-1$ if $n-t>\frac{\lambda(q-1)}{\lambda-1}$ and $n\geqslant 2t+2$.
\end{enumerate}
\end{Corollary}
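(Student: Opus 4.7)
The plan is to derive both bounds as direct corollaries of results already assembled in the excerpt, with essentially no new work required. In both (1) and (2), the common lower bound $d \geqslant t+1$ is immediate from Lemma \ref{1}, which identifies irredundancy with this inequality.

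For (1), I would obtain the upper bound $d \leqslant n-t$ by invoking Lemma \ref{15}: this lemma characterizes $d = n-t+1$ as equivalent to $\lambda = 1$, so under the standing hypothesis $\lambda > 1$ the equality $d = n-t+1$ is ruled out. Combined with the Singleton-type inequality $d \leqslant n-t+1$ valid for any OA of strength $t$ (a consequence of $|C| \leqslant q^{n-d+1}$ applied to $M = \lambda q^t$, and in fact explicitly recorded in the sentence preceding Theorem \ref{3}), this forces $d \leqslant n-t$.

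For (2), the sharpened upper bound $d \leqslant n-t-1$ is exactly the conclusion of Theorem \ref{3} under the hypothesis $n-t > \lambda(q-1)/(\lambda-1)$, which is one of the two assumptions. The second assumption $n \geqslant 2t+2$ plays no role in either individual inequality; its purpose is to guarantee that the sandwich $t+1 \leqslant d \leqslant n-t-1$ is non-vacuous. Without this compatibility condition (for instance, when $n = 2t+1$), the combination of Lemma \ref{1} and Theorem \ref{3} would instead yield the non-existence of such an IrOA rather than a bound on $d$.

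There is no significant obstacle here, as both bounds are immediate citations of prior results. The only point I would be careful to state explicitly is that $n \geqslant 2t+2$ functions as a compatibility condition between the two-sided bounds in (2), rather than as a hypothesis on which either bound depends intrinsically; I would also briefly note that the standing paper-wide assumption $n \geqslant 2t$ is automatically satisfied in both cases.
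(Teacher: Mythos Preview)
Your proposal is correct and follows essentially the same approach as the paper, which simply writes that both statements follow from Lemma~\ref{1} and Theorem~\ref{3} (together with the observation, recorded just before Theorem~\ref{3}, that $\lambda>1$ forces $d\leqslant n-t$ via Lemma~\ref{15}). Your added remark that $n\geqslant 2t+2$ serves only as a compatibility condition is a useful clarification that the paper leaves implicit.
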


\begin{proof}
Both statements follow from Lemma \ref{1} and Theorem \ref{3}.
\end{proof}

Compared to the bounds in Corollary \ref{00}, we provide a tighter upper bound on the minimum distance of IrOAs for a sufficiently large index $\lambda$, particularly when $\lambda \geqslant q^m$ for some $m > 2$.

\begin{Theorem}\label{05}
Suppose that $\text{IrOA}(M, n, q, t)$ has index $\lambda > 1$. If $\lambda \geqslant q^m$ and $n \geqslant 2t + m$ for some integer $m \geqslant 1$, then
\begin{align*}
  d \leqslant n - t - m + 1.
\end{align*}
\end{Theorem}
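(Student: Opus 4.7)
The plan is to prove the bound by a direct pigeonhole argument applied to a projection of the IrOA onto $t+m-1$ coordinates. The underlying observation is that the hypothesis $\lambda\geqslant q^m$ forces the number of rows $M=\lambda q^t\geqslant q^{t+m}$ to exceed $q^{t+m-1}$, so on any such projection collisions must occur.

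Concretely, the plan is to proceed as follows. First, fix an arbitrary subset $S\subseteq\{1,2,\ldots,n\}$ of size $|S|=t+m-1$; this is legitimate since $n\geqslant 2t+m\geqslant t+m-1$. Second, project each of the $M$ rows of the $\text{IrOA}(M,n,q,t)$ onto the coordinates in $S$, obtaining $M$ vectors in $H_q^{t+m-1}$. Third, invoke the hypothesis $\lambda\geqslant q^m$ to conclude
\[
M=\lambda q^{t}\geqslant q^{m}\cdot q^{t}=q^{t+m}>q^{t+m-1},
\]
so that the pigeonhole principle yields two distinct rows $x,y$ of the IrOA that agree on every coordinate in $S$. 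Fourth, conclude $d(x,y)\leqslant n-|S|=n-t-m+1$, and hence $d\leqslant n-t-m+1$.

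A subtlety to address explicitly is the distinctness of the two colliding rows as elements of the IrOA: this follows from the irredundancy condition, since Lemma~\ref{1} gives $d\geqslant t+1\geqslant 1$, so in particular all $M$ rows of an IrOA are pairwise distinct. I would also remark that the auxiliary hypothesis $n\geqslant 2t+m$ is precisely the compatibility condition that prevents the new upper bound $n-t-m+1$ from dropping below the irredundancy lower bound $t+1$ from Lemma~\ref{1}; without this assumption the pigeonhole conclusion would contradict irredundancy, ruling out the existence of such an IrOA. There is no real obstacle in this argument beyond checking these compatibility conditions and justifying row-distinctness, so the proof should be short.
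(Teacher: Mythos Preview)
Your proof is correct and follows essentially the same approach as the paper's: the paper simply invokes the Singleton bound $d \leqslant n - \lfloor \log_q M \rfloor + 1$ together with $M = \lambda q^t \geqslant q^{t+m}$, while you carry out the standard pigeonhole proof of that bound inline. Your remark that $n \geqslant 2t+m$ is exactly the compatibility condition with $d \geqslant t+1$ from Lemma~\ref{1} also matches the paper's observation verbatim.
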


\begin{proof}
Since $M=\lambda q^t \geqslant q^{t+m}$, it follows that $\log_q M \geqslant t + m$. Applying the Singleton bound gives
\begin{align*}
d \leqslant n - \lfloor \log_q M \rfloor + 1 \leqslant  n - t - m + 1.
\end{align*}
Note that the condition $n \geqslant 2t + m$ is necessary since we have to comply with the bound $d \geqslant t + 1$ by Lemma \ref{1}.
\end{proof}

\subsection{Bounds on covering radius}

We recall that the covering radius of a code $C \subseteq H_q^n$ is 
\begin{align*}
    \rho=\rho(C) = \max_{x \in H_q^n} \min_{y \in C} d(x, y).
\end{align*}
Regarding the upper bounds for orthogonal arrays, Delsarte bound (see \cite{Delsarte1_1973,Delsarte2_1973}; cf. \cite{Cohen_1985} for a survey) states that $\rho \leqslant n-t$. Some conditional refinements were obtained in \cite{BRS2020}, namely $\rho \leqslant n-t-1$ if $n>t+q-1$ and $\rho \leqslant n-t-2$ if $n>2(t+q-1)$. Note that the Delsarte bound is attained by the RM codes for $m-3 \leqslant r \leqslant m$ and by the RS codes.  For further results on the covering radius of RM, GRM, RS, and GRS codes, see, e.g., \cite{Bartoli_2014, Hou_1993, Leducq_2012}.

If $C \subset H_q^n$ is a code with covering radius $\rho$, then the sphere covering argument gives 
\begin{equation} \label{cov-arg}
|C| \cdot V_q(t, \rho) \geqslant q^{n}, 
\end{equation}
where $V_q(t, \rho) = \sum_{i=0}^\rho \binom{t}{i}(q-1)^i$ is the volume of a (Hamming) ball in $H_q^n$ of radius $\rho$.

The inequality \eqref{cov-arg} and the OAs and IrOAs properties imply the following restrictions. 

\begin{Theorem}\label{099}
We have $\lambda V_q(t,\rho) \geqslant q^{n-t}$ for any $\text{OA}(M, n, q, t)$ and $ V_q(t, \rho) \geqslant q^{t}$ for any $\text{IrOA}(M, n, q, t)$.
\end{Theorem}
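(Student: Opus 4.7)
The plan is to derive both inequalities as immediate consequences of the sphere covering bound \eqref{cov-arg} recorded just before the theorem, combined with the two basic parameter relations already established in Section~2: the index identity $M = \lambda q^t$ and the irredundancy bound $M \leqslant q^{n-t}$.

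For the first inequality, I would take an arbitrary $\text{OA}(M,n,q,t)$, substitute $M = \lambda q^t$ into the covering inequality $M \cdot V_q(t,\rho) \geqslant q^n$, and cancel the factor $q^t$ from both sides to obtain
\[ \lambda V_q(t,\rho) \geqslant q^{n-t}. \]
No additional hypothesis on the parameters is needed, since \eqref{cov-arg} applies to any code in $H_q^n$ and is in particular valid for the orthogonal array viewed as such a code.

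For the second inequality, I would specialise to an $\text{IrOA}(M,n,q,t)$ and invoke the bound $M \leqslant q^{n-t}$ that was already derived from Definition~\ref{def-irr-oa} at the beginning of Section~2 (if any projection onto $n-t$ coordinates has pairwise distinct rows, then $M$ cannot exceed the total number $q^{n-t}$ of such tuples). Equivalently, $\lambda \leqslant q^{n-2t}$. Feeding this into the inequality just proved for OAs yields
\[ q^{n-t} \leqslant \lambda V_q(t,\rho) \leqslant q^{n-2t}\, V_q(t,\rho), \]
and cancelling $q^{n-2t}$ gives $V_q(t,\rho) \geqslant q^t$.

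There is essentially no obstacle here: the argument is a one-line rearrangement of \eqref{cov-arg} in each case, using only the definition of the index and the irredundancy constraint $M \leqslant q^{n-t}$. The only thing worth flagging in the write-up is that the two halves of the theorem are logically linked (the IrOA bound is obtained by combining the OA bound with the irredundancy constraint), so presenting them in this order makes the proof a two-step deduction rather than two independent applications of the sphere covering argument.
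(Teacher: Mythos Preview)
Your proposal is correct and matches the paper's own proof: both parts are immediate rearrangements of the sphere covering bound \eqref{cov-arg}, using $M=\lambda q^t$ for the first and the irredundancy bound $M\leqslant q^{n-t}$ for the second. The only cosmetic difference is that the paper derives the second inequality directly from \eqref{cov-arg} and $M\leqslant q^{n-t}$ rather than routing through the first inequality via $\lambda\leqslant q^{n-2t}$, but the two deductions are trivially equivalent.
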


\begin{proof}
For the first inequality, it is enough to use $M=\lambda q^t$.
The second inequality follows from $M\leqslant q^{n-t}$, which is true for any IrOA. 
\end{proof}

In the special case $M=q^{n-t}$, we detemine $\rho$.

\begin{Theorem}
The covering radius of any $\text{IrOA}(q^{n-t}, n, q, t)$ is $\rho= t$.    
\end{Theorem}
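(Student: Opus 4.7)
The plan is to pin down $\rho$ by combining the Delsarte upper bound recalled just before Theorem~\ref{099} with the sphere-covering lower bound in Theorem~\ref{099} itself. First, I would apply Theorem~\ref{035} to reduce to the setting $n = 2t$, $\lambda = 1$, $d = t + 1$, so one is dealing with a code in $H_q^{2t}$ of size $q^t$ and minimum distance $t + 1$. This reduction is the only place where the assumption $M = q^{n-t}$ is used, and it turns the claim into a statement about a code whose parameters are completely pinned down.

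For the upper bound, the Delsarte bound $\rho \leqslant n - t$ recalled at the beginning of the covering-radius subsection immediately specializes to $\rho \leqslant n - t = t$ (one could alternatively note that $\rho \leqslant t$ follows from the $\lambda = 1$ strength-$t$ property: every $t$-projection of the code contains every $t$-tuple, so any $x \in H_q^n$ agrees with some codeword on any prescribed set of $t$ coordinates, forcing disagreement on at most $n - t = t$ of the remaining coordinates). For the lower bound, I would invoke the second assertion of Theorem~\ref{099}, which gives $V_q(t, \rho) \geqslant q^t$ for any IrOA. Using the binomial identity
\[ \sum_{i=0}^{t}\binom{t}{i}(q-1)^i = (1+(q-1))^t = q^t, \]
together with the strict positivity of each summand $\binom{t}{i}(q-1)^i$ for $q \geqslant 2$, any truncation at $\rho \leqslant t - 1$ would drop at least the nonzero term $\binom{t}{t}(q-1)^t = (q-1)^t$ and therefore satisfy $V_q(t, \rho) \leqslant q^t - (q-1)^t < q^t$, contradicting Theorem~\ref{099}. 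Hence $\rho \geqslant t$, and combining with the Delsarte bound yields $\rho = t$.

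The argument is essentially bookkeeping, pinching $\rho$ from above by Delsarte and from below by Theorem~\ref{099} so that both bounds meet exactly at $t$. The only step that deserves explicit care is the strictness in the lower bound: for $\rho \leqslant t - 1$ the partial sum $V_q(t, \rho)$ is strictly less than $q^t$, which follows instantly from the binomial identity and the positivity of every omitted term when $q \geqslant 2$. No further obstacle is anticipated, since the structural heavy lifting ($n = 2t$, $\lambda = 1$, $d = t+1$) has already been carried out in Theorem~\ref{035}, and the metric inequalities have been established in Theorem~\ref{099} and in the Delsarte bound cited above.
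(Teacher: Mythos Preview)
Your upper-bound argument is correct and close to the paper's: the paper argues directly that, since $M=q^{n-t}$ and all rows of any $(n-t)$-column subarray are distinct, every $(n-t)$-tuple occurs exactly once there, so any $x\in H_q^n$ agrees with some codeword on $n-t$ positions and hence $d(x,C)\leqslant t$. Your route via Theorem~\ref{035} and the Delsarte bound $\rho\leqslant n-t=t$ reaches the same conclusion, and your parenthetical alternative is essentially the paper's argument verbatim.

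The lower bound, however, has a genuine gap---the same one present in the paper's own proof. The sphere-covering inequality~\eqref{cov-arg} involves the volume of a ball in $H_q^n$, namely $\sum_{i=0}^{\rho}\binom{n}{i}(q-1)^i$; the definition of $V_q(t,\rho)$ with $\binom{t}{i}$ just before Theorem~\ref{099} is a slip, as the accompanying text ``the volume of a (Hamming) ball in $H_q^n$'' makes clear. With the correct binomial coefficient $\binom{n}{i}$, the identity $\sum_{i=0}^{t}\binom{t}{i}(q-1)^i=q^t$ is no longer relevant, and the implication ``$\rho<t\Rightarrow V_q(n,\rho)<q^t$'' simply fails: for $q=3$, $t=2$, $n=4$ one has $\sum_{i=0}^{1}\binom{4}{i}2^i=9=3^2$, so $\rho=1$ already satisfies the corrected bound. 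In fact the theorem itself is false as stated. The ternary tetracode $[4,2,3]_3$ is self-dual and MDS, hence by Theorems~\ref{059} and~\ref{083} it is an $\text{IrOA}(9,4,3,2)$ with $M=q^{n-t}$ and $t=2$; but it is a perfect one-error-correcting code, so its covering radius is $\rho=1<2=t$. No repair of the lower-bound step is therefore possible.
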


\begin{proof}
Let $C$ be an $\text{IrOA}(q^{n-t}, n, q, t)$ and $x \in H_q^n \setminus C$. Fix $n-t$ columns in the matrix formed by the codewords of $C$. Since every $(n-t)$-tuple appears exactly once as a row defined by the chosen $n-t$ columns, the point $x$ coincides with a row of $C$ in these $n-t$ columns. Therefore, $x$ and that row of $C$ are different in at most $t$ positions, i.e. $d(x,C) \leqslant t$, whence $\rho \leqslant t$ from the definition of covering radius.

Assume that $\rho<t$. By the binomial theorem, we obtain 
\begin{align*}
     V_q(t, \rho) = \sum_{i=0}^\rho \binom{t}{i}(q-1)^i < \sum_{i=0}^t \binom{t}{i}(q-1)^i=(1+(q-1))^t=q^t,
\end{align*}
which contradicts the second inequality of Theorem \ref{099}. Therefore, $\rho=t$.
\end{proof}

\begin{Theorem}
The covering radius of any $\text{IrOA}(M, n, q, t)$ satisfies the inequality $ \lfloor t/2\rfloor\leqslant \rho$.
\end{Theorem}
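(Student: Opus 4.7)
The plan is to combine Lemma \ref{1} (any IrOA has $d \geqslant t+1$) with the classical general bound $\rho \geqslant \lfloor d/2 \rfloor$, valid for any code of size at least two. Since $M = \lambda q^t \geqslant q^t \geqslant 2$ once $t \geqslant 1$, and the case $t = 0$ is vacuous, this two-codeword hypothesis is automatic. The substitution will then yield
\begin{align*}
\rho \;\geqslant\; \lfloor d/2 \rfloor \;\geqslant\; \lfloor (t+1)/2 \rfloor \;=\; \lceil t/2 \rceil \;\geqslant\; \lfloor t/2 \rfloor,
\end{align*}
which is the claimed inequality (and in fact one unit sharper when $t$ is odd).

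To establish $\rho \geqslant \lfloor d/2 \rfloor$ I would use the standard ``midpoint'' construction. Fix $x, y \in C$ with $d(x, y) = d$, let $S = \{i : x_i \neq y_i\}$ (so $|S| = d$), pick any $T \subseteq S$ of size $\lfloor d/2 \rfloor$, and define $z \in H_q^n$ by
\begin{align*}
z_i = \begin{cases} y_i, & i \in T, \\ x_i, & i \notin T. \end{cases}
\end{align*}
Then directly $d(z, x) = \lfloor d/2 \rfloor$ and $d(z, y) = \lceil d/2 \rceil$. For any other codeword $w \in C \setminus \{x\}$, the triangle inequality together with the minimum-distance property give $d(z, w) \geqslant d(x, w) - d(z, x) \geqslant d - \lfloor d/2 \rfloor = \lceil d/2 \rceil$. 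Hence $d(z, C) = \lfloor d/2 \rfloor$ (attained by $x$), so $\rho \geqslant d(z, C) = \lfloor d/2 \rfloor$.

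There is no serious obstacle: the only things to check are the well-definedness of the construction (which requires only $|S| = d \geqslant \lfloor d/2 \rfloor$) and the single application of the triangle inequality. It is perhaps worth noting in passing that the same argument actually delivers the sharper lower bound $\rho \geqslant \lceil t/2 \rceil$, should the authors wish to upgrade the statement.
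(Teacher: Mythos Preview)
Your proof is correct and follows essentially the same approach as the paper: combine the general bound ``packing radius $\leqslant$ covering radius'' with Lemma~\ref{1}. The paper simply cites the inequality $\rho \geqslant \lfloor (d-1)/2 \rfloor$ from \cite[Section 1.12]{Huffman_2010}, whereas you prove the slightly stronger $\rho \geqslant \lfloor d/2 \rfloor$ from scratch via the midpoint construction, which indeed yields the sharper bound $\rho \geqslant \lceil t/2 \rceil$.
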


\begin{proof}
It is a well-known result (see \cite[Section 1.12]{Huffman_2010}) that the covering radius of any code $C$ satisfies 
\begin{align*}
    \left\lfloor \frac{d(C)-1}{2}\right\rfloor\leqslant \rho(C),
\end{align*}
where $d(C)$ is the minimum distance. By Lemma \ref{1}, $t+1\leqslant d(C)$, which completes the proof.
\end{proof}

\section*{Acknowledgement} This research was supported by the project IC-TR/10/2024-2025. The authors would like to thank Alexander Barg for his valuable suggestions and helpful comments.


\end{document}